\def\ver{2.3}
\newcommand{\update}[2]{\ifthenelse{\equal{#1}{\ver}}{\textcolor{red}{#2}}{#2}}
\theoremstyle{plain}
\newtheorem{theorem1}{Theorem}[section]
\newtheorem{proposition}[theorem1]{Proposition}
\theoremstyle{definition}
\newtheorem{example}[theorem1]{Example}
\newtheorem{remark}[theorem1]{Remark}
\numberwithin{table}{section}
\numberwithin{theorem1}{section}
\numberwithin{figure}{section}
\numberwithin{equation}{section}
\newcommand{\ZZ}{\mathrm{Z}}
\newcommand{\be}{\begin{equation}}
\newcommand{\ee}{\end{equation}}
\newcommand{\bea}{\begin{eqnarray}}
\newcommand{\eea}{\end{eqnarray}}
\newcommand{\bdm}{\begin{displaymath}}
\newcommand{\edm}{\end{displaymath}}
\newcommand{\beaq}{\begin{eqnarray*}}
\newcommand{\eeaq}{\end{eqnarray*}}
\newcommand{\R}{{\mathbb{R}}}
\newcommand{\II}{{\mathrm{I}}}
\newcommand{\EE}{{\mathrm{E}}}
\newcommand{\ii}{{\mathbf{i}}}
\newcommand{\jj}{{\mathbf{j}}}
\newcommand{\mm}{{\mathbf{m}}}
\newcommand{\NN}{{\mathbf{N}}}
\newcommand{\hh}{{\mathbf{h}}}
\newcommand{\fftd}{{\mathrm{FFT}_d}}
\newcommand{\ra}[1]{\renewcommand{\arraystretch}{#1}}
\providecommand{\keywords}[1]{\textbf{\textit{Keywords:}} #1}
\begin{document}

\title{A Block Circulant Embedding Method for Simulation of Stationary Gaussian Random Fields on Block-regular Grids}
\author{M. Park\footnotemark[1] \ and M.V. Tretyakov\footnotemark[1]}
\date{\today}

\maketitle

\begin{abstract}
We propose a new method for sampling from stationary Gaussian random field on a grid which is not regular but has a regular block structure which is often the case in applications. The introduced block circulant embedding method (BCEM) can outperform the classical circulant embedding method (CEM) which requires a regularization of the irregular grid before its application. Comparison of BCEM vs CEM is performed on some typical model problems.
\end{abstract}
\keywords{Stationary Gaussian random field, irregular grids, sampling techniques, circulant embedding method, symmetric block-Toeplitz matrices, block fast Fourier transform.}

\renewcommand{\thefootnote}{\arabic{footnote}} \renewcommand{\thefootnote}{%
\fnsymbol{footnote}}
\footnotetext[1]{School of Mathematical Sciences, University of Nottingham, %\\
Nottingham, NG7 2RD, UK (emails: min.park@nottingham.ac.uk.and Michael.Tretyakov@nottingham.ac.uk.}

%%%%%%%%%%%%%%%%%%%%%%%%%%%%%%%%%%%%%%%%%%%%%%%%%%%%%%%%%%%%%%%%%%%%%%%%%%%%%%%%%%%%%%%%%%%%%%%%%%%%%
\section{Introduction}
Uncertainties are often modeled using stationary Gaussian fields \cite{delhomme:1979,gel:2004,marheineke:2006,skordos:2008,zhang:2010}. Efficient generation of samples from stationary Gaussian fields is crucial for using Monte Carlo techniques, which are the backbone of uncertainty quantification simulations, in studying behavior of systems subject to uncertainties. There are a few numerical techniques for sampling Gaussian random fields on a grid. For instance, one can find a square-root of the corresponding covariance matrix using Cholesky's decomposition and then multiply the square-root by a vector of independent Gaussian random variables to simulate a sample. This is an exact method but it is rarely used in applications due to the high cost of Cholesky's decomposition in high dimensions. Another possibility is the Karhunen-Loeve expansion  (see, e.g. \cite{ghanem:1991, maitre:2010}), which requires knowledge of eigenvalues and eigenfunctions of the covariance operator for the Gaussian random field. In many cases of practical interest the eigenvalue problem has to be solved numerically which can be expensive, especially when eigenvalues decay slowly. Also, this method is not exact. The fast and exact method of generating large samples from stationary Gaussian fields on regular grids is the circulant embedding method (CEM) \cite{wood:1994,dietrich:1997,wood:1999} which is widely used in various uncertainty quantification (UQ) applications such as groundwater flow simulation \cite{ghanem:1991,park:2014}, weather field forecasting \cite{gel:2004}, and liquid composite molding processes \cite{verleye:2012}. The two main drawbacks of CEM are (i) the requirement imposed on the grid to be regular while  irregular grids of a block structure naturally appear in many applications (see three typical examples below) and (ii) the need to deal with non-positive definiteness of circular embedding matrices which often occur in practical applications. The remedies for the latter were considered in \cite{stein:2002,gneiting:2006,helgason:2014}, here we deal with the first deficiency of CEM. To this end,
we propose a new block circulant embedding method (BCEM). Let us clarify the matter using the following three examples which come from sampling a random permeability field in groundwater flow simulations.

\begin{example}\label{ex:fem2d}
{\it Triangular finite element with a quadrature point located at the barycentre of the triangle.}
\end{example}
\vspace{-8pt}
\noindent Consider generation of a stationary log-normal random permeability field to be used in simulations based on triangular finite elements and the Gaussian quadrature rule of degree 1 within a rectangular domain. Assume that the rectangular domain consists of small rectangles (see Figure \ref{fig:2D_FEM}) and that there is no overlap of these rectangles. To perform the finite element simulation, it is sufficient to have sampled values of the permeability field at the quadrature points only (see the black circles in Fig.~\ref{fig:2D_FEM}). The covariance matrix of the corresponding stationary Gaussian random field at all quadrature (black) points is symmetric block-Toeplitz, but the blocks themselves are not symmetric. Hence, in order for the standard CEM to be applicable, 7 extra (i.e., artificial from the point of view of sampling permeability values sufficient for the finite element simulation) points should be added to each rectangle (the gray circles in Figure \ref{fig:2D_FEM} are regular grid points involving black circles). In contrast to CEM, the new method - BCEM - allows to sample values at the required points (black circles) without adding extra nodes to the grid and it does so in a very efficient way as we will see in the next sections.

%%%%%%%%%%%%%%%%%%%%%%%%FIGURE%%%%%%%%%%%%%%%%%%%%%%%%%%
\begin{figure}
\centering
\begin{tikzpicture}
\def\a{0}
\def\b{0}
\def\c{7.5}
\def\d{0}
\def\f{4.5}
\draw[-][very thick](\a,\b)--(\a+\f,\b);
\draw[-][very thick](\a,\b)--(\a,\b+\f);
\draw[-][very thick](\a,\b+\f)--(\a+\f,\b+\f);
\draw[-][very thick](\a+\f,\b)--(\a+\f,\b+\f);
\draw[-][very thick](\a,\b+\f)--(\a+\f,\b);
\node at (\a+\f*1/3,\b+\f*1/3)[circle,fill=black,inner sep=0pt,minimum size=4pt]{};
\node at (\a+\f*2/3,\b+\f*2/3)[circle,fill=black,inner sep=0pt,minimum size=4pt]{};
\draw[-][very thick](\c,\d)--(\c+\f,\d);
\draw[-][very thick](\c,\d)--(\c,\d+\f);
\draw[-][very thick](\c,\d+\f)--(\c+\f,\d+\f);
\draw[-][very thick](\c+\f,\d)--(\c+\f,\d+\f);
\draw[-][very thick](\c,\d+\f)--(\c+\f,\d);
\node at (\c+\f*0/3,\d+\f*0/3)[circle,fill=gray,inner sep=0pt,minimum size=8pt]{};
\node at (\c+\f*0/3,\d+\f*1/3)[circle,fill=gray,inner sep=0pt,minimum size=8pt]{};
\node at (\c+\f*0/3,\d+\f*2/3)[circle,fill=gray,inner sep=0pt,minimum size=8pt]{};
\node at (\c+\f*1/3,\d+\f*0/3)[circle,fill=gray,inner sep=0pt,minimum size=8pt]{};
\node at (\c+\f*1/3,\d+\f*1/3)[circle,fill=gray,inner sep=0pt,minimum size=8pt]{};
\node at (\c+\f*1/3,\d+\f*1/3)[circle,fill=black,inner sep=0pt,minimum size=4pt]{};
\node at (\c+\f*1/3,\d+\f*2/3)[circle,fill=gray,inner sep=0pt,minimum size=8pt]{};
\node at (\c+\f*2/3,\d+\f*0/3)[circle,fill=gray,inner sep=0pt,minimum size=8pt]{};
\node at (\c+\f*2/3,\d+\f*1/3)[circle,fill=gray,inner sep=0pt,minimum size=8pt]{};
\node at (\c+\f*2/3,\d+\f*2/3)[circle,fill=gray,inner sep=0pt,minimum size=8pt]{};
\node at (\c+\f*2/3,\d+\f*2/3)[circle,fill=black,inner sep=0pt,minimum size=4pt]{};
\end{tikzpicture}
\caption{Left: locations of nodes in 2D triangular elements. Right: regular grid nodes (gray) which contain all the original nodes (black). Note that the nodes on the right and top sides of the rectangles belong to the neighboring elements. } \label{fig:2D_FEM}
\end{figure}

%%%%%%%%%%%%%%%%%%TABLE%%%%%%%%%%%%%%%%%%%%%%%%%%
\begin{table}
\centering
\ra{1.4}
\begin{tabular}{|c|c|c|c|}
\hline
dimension & BCEM & CEM & CEM/BCEM \\
\hline
1 & (2+1)$n$ & 4$n$ & 1.3 \\
\hline
2 & ($2^2+1$)$n$ & $4^2n$ & 3.2 \\
\hline
3 & $(2^3+1)n$ & $4^3n$& 7.1\\
\hline
\end{tabular}
\caption{Comparison of the number of nodes needed by BCEM and CEM in Example~\ref{ex:mlmc}.  } \label{tab:mlmc}
\end{table}

%%%%%%%%%%%%%%%%%%%%%%%%%%%%%%%EX2%%%%%%%%%%%%%%%%%%%%%%%%%%%%%%
\begin{example}\label{ex:mlmc}
{\it Cell-centered finite volume discretization in multilevel Monte Carlo (MLMC) computation.}
\end{example}
\vspace{-8pt}
\noindent The multilevel Monte Carlo (MLMC) method is a Monte Carlo technique, which can give a substantial reduction of computational complexity in comparison with the standard Monte Carlo method thanks to making use of a hierarchical sampling \cite{cliffe:2011,barth:2011}. In the MLMC algorithm, when computing the difference of quantities on two consecutive grids with mesh sizes $h$ and $2h$, the pair of fine and coarse random samples must come from the same realization of the random field. In the cell-centered finite volume discretization, which uses permeability values at the centers of cells, the locations of coarse random filed do not coincide with nodes on the fine grid (see Fig.~\ref{fig:2D_MLMC}). In this case there exists a uniform grid with the mesh size $h/2$ containing both fine and coarse points, and hence it is possible to generate the required pair from the same realization by applying CEM on this finer uniform grid (grey circles in Fig.~\ref{fig:2D_MLMC}). However, this leads to an increase of both simulation time and memory requirements and, hence, to deterioration of the MLMC performance. Table \ref{tab:mlmc} compares the number of nodes at which the random field actually needed to be sampled for MLMC (which will be the same as the number of nodes used in BCEM) against the ones on the fine, regularized grid required by CEM. The portion of unused values grows as dimension increases. The benefit of BCEM is that exploiting the block-regular structure of grids used in MLMC, it allows us to sample at the points used in finite volume simulation without need to regularize the grid by adding extra points, which can result in substantial savings of both computational time and memory in comparison with applying CEM.

%%%%%%%%%%%%%%%%%%%%FIGURE%%%%%%%%%%%%%%%%%%%
\begin{figure}
\centering
\begin{tikzpicture}
\def\a{0}
\def\b{0}
\def\c{7.5}
\def\d{0}
\def\f{4.5}
\draw[-][very thick](\a,\b)--(\a+\f,\b);
\draw[-][very thick](\a,\b)--(\a,\b+\f);
\draw[-][very thick](\a,\b+\f)--(\a+\f,\b+\f);
\draw[-][very thick](\a+\f,\b)--(\a+\f,\b+\f);
\draw[-][dashed](\a+\f*1/2,\b)--(\a+\f*1/2,\b+\f);
\draw[-][dashed](\a,\b+\f*1/2)--(\a+\f,\b+\f*1/2);
\draw [thick, black,decorate,decoration={brace,amplitude=10pt,mirror},xshift=0.4pt,yshift=-0.4pt](\a,\b) -- (\a+\f*1/2,\b) node[black,midway,yshift=-0.6cm] {\footnotesize $h$};
\draw [thick, black,decorate,decoration={brace,amplitude=10pt,mirror},xshift=0.4pt,yshift=-0.4pt](\a,\b-.2) -- (\a+\f,\b-.2) node[black,midway,yshift=-0.6cm] {\footnotesize $2h$};
\node at (\a+\f*1/4,\b+\f*1/4)[circle,fill=black,inner sep=0pt,minimum size=4pt]{};
\node at (\a+\f*3/4,\b+\f*1/4)[circle,fill=black,inner sep=0pt,minimum size=4pt]{};
\node at (\a+\f*1/4,\b+\f*3/4)[circle,fill=black,inner sep=0pt,minimum size=4pt]{};
\node at (\a+\f*3/4,\b+\f*3/4)[circle,fill=black,inner sep=0pt,minimum size=4pt]{};
\node at (\a+\f*2/4,\b+\f*2/4)[circle,draw,inner sep=0pt,minimum size=4pt]{};
\draw[-][very thick](\c,\d)--(\c+\f,\d);
\draw[-][very thick](\c,\d)--(\c,\d+\f);
\draw[-][very thick](\c,\d+\f)--(\c+\f,\d+\f);
\draw[-][very thick](\c+\f,\d)--(\c+\f,\d+\f);
\node at (\c+\f*0/4,\d+\f*0/4)[circle,fill=gray,inner sep=0pt,minimum size=8pt]{};
\node at (\c+\f*0/4,\d+\f*1/4)[circle,fill=gray,inner sep=0pt,minimum size=8pt]{};
\node at (\c+\f*0/4,\d+\f*2/4)[circle,fill=gray,inner sep=0pt,minimum size=8pt]{};
\node at (\c+\f*0/4,\d+\f*3/4)[circle,fill=gray,inner sep=0pt,minimum size=8pt]{};
\node at (\c+\f*1/4,\d+\f*0/4)[circle,fill=gray,inner sep=0pt,minimum size=8pt]{};
\node at (\c+\f*1/4,\d+\f*1/4)[circle,fill=gray,inner sep=0pt,minimum size=8pt]{};
\node at (\c+\f*1/4,\d+\f*1/4)[circle,fill=black,inner sep=0pt,minimum size=4pt]{};
\node at (\c+\f*1/4,\d+\f*2/4)[circle,fill=gray,inner sep=0pt,minimum size=8pt]{};
\node at (\c+\f*1/4,\d+\f*3/4)[circle,fill=gray,inner sep=0pt,minimum size=8pt]{};
\node at (\c+\f*1/4,\d+\f*3/4)[circle,fill=black,inner sep=0pt,minimum size=4pt]{};
\node at (\c+\f*2/4,\d+\f*0/4)[circle,fill=gray,inner sep=0pt,minimum size=8pt]{};
\node at (\c+\f*2/4,\d+\f*1/4)[circle,fill=gray,inner sep=0pt,minimum size=8pt]{};
\node at (\c+\f*2/4,\d+\f*2/4)[circle,fill=gray,inner sep=0pt,minimum size=8pt]{};
\node at (\c+\f*2/4,\d+\f*2/4)[circle,draw,inner sep=0pt,minimum size=4pt]{};
\node at (\c+\f*2/4,\d+\f*3/4)[circle,fill=gray,inner sep=0pt,minimum size=8pt]{};
\node at (\c+\f*3/4,\d+\f*0/4)[circle,fill=gray,inner sep=0pt,minimum size=8pt]{};
\node at (\c+\f*3/4,\d+\f*1/4)[circle,fill=gray,inner sep=0pt,minimum size=8pt]{};
\node at (\c+\f*3/4,\d+\f*1/4)[circle,fill=black,inner sep=0pt,minimum size=4pt]{};
\node at (\c+\f*3/4,\d+\f*2/4)[circle,fill=gray,inner sep=0pt,minimum size=8pt]{};
\node at (\c+\f*3/4,\d+\f*3/4)[circle,fill=gray,inner sep=0pt,minimum size=8pt]{};
\node at (\c+\f*3/4,\d+\f*3/4)[circle,fill=black,inner sep=0pt,minimum size=4pt]{};
\draw [thick, black,decorate,decoration={brace,amplitude=10pt,mirror},xshift=0.4pt,yshift=-0.4pt](\c,\d) -- (\c+\f*1/4,\d) node[black,midway,yshift=-0.6cm] {\footnotesize $h/2$};
\end{tikzpicture}
\caption{Left: location of sampling points on the fine grid (black) of size $h$ and the coarse grid (hollow) of size $2h$ using the cell-centered finite volume discretization in 2D. Right: uniform grid (gray) which contains both black and hollow points. Note that the nodes on the right and top sides of the rectangles belong to the neighboring elements. } \label{fig:2D_MLMC}
\end{figure}

%%%%%%%%%%%%%%%%%%%%%%%%%%%%%%%EX3%%%%%%%%%%%%%%%%%%%%%%%%%%%%%%
\begin{example}\label{ex:cond}
{\it Conditional random field generation on block regular grids.}
\end{example}
\vspace{-8pt}
\noindent The conditional random field generation based on CEM was considered in \cite{dietrich:1996}. In this approach one builds a symmetric matrix of the form
\begin{equation} \label{eq:cond_circ}
R =
\left[
\begin{array}{c c}
R_{11} & R_{12} \\
R_{21} & R_{22}
\end{array}
\right],
\end{equation}
where $R_{11} \in$ \update{2.2}{$\mathbb{R}^{n_1\times n_1}$} is a (block) circulant matrix, and $R_{22}\in$ \update{2.2}{$\mathbb{R}^{n_2\times n_2}$} is a covariance matrix of filed values at locations of measurements, and generate random vectors using its square root
\begin{equation} \label{eq:cond_circ_half}
R^{1/2} =
\left[
\begin{array}{c c}
\frac{1}{\sqrt{n_1}}F\Lambda^{1/2} & 0 \\
K & L
\end{array}
\right],
\end{equation}
where  $K = \sqrt{n_1}R_{21}F\Lambda^{-1/2}$ and $L$ is a matrix such that $LL^T = R_{22}-KK^H$. Here $F$ denotes a discrete Fourier transform matrix, and $\Lambda$ is a diagonal matrix whose entries are eigenvalues of $R_{11}$. The computational costs of forming $\Lambda, K^H$ and $KK^H$ are $\mathcal{O}(n_1 \log n_1)$ flops, $\mathcal{O}(n_2n_1\log n_1)$ flops, and $\mathcal{O}(n_2^2n_1)$ flops, respectively.

BCEM can also be used for generation of random fields conditioned on observations. As with the unconditional sampling discussed above, applications of conditional sampling often deal with grids which are not regular but have a regular block structure
(see, e.g. conditional MLMC simulation in \cite{park:2014}). Since BCEM requires smaller $n_1$ value as it does in the unconditional case, BCEM in the conditional random field setting can outperform CEM.

\medskip

BCEM also has the remarkable feature that it is paralellizable in contrast to the standard CEM which is a serial algorithm (of course, CEM can exploit parallelism of the fast Fourier transform (FFT) but BCEM's main ingredient is also FFT and it can benefit from FFT parallelism as well), i.e., BCEM has a further significant advantage over CEM.

The rest of the paper is organized as follows. In Section~\ref{section:1d} we illustrate the idea of BCEM in the case of one-dimensional space. In Section~\ref{section:md} we present a multi-dimensional BCEM. Computational complexity of BCEM is discussed in Section~\ref{sec:comp}, where some numerical experiments comparing BCEM with the standard CEM show that already in 2D BCEM can be three time faster in sample generation than CEM.

\section{Illustration of the idea} \label{section:1d}

To illustrate the idea of BCEM, we start with presenting it in the 1D case.
\begin{figure}[H]
\centering
\begin{tikzpicture}
\draw[-][very thick](0,0)--(6,0);
\draw[very thick, dashed](6,0)--(8,0);
\draw[-][very thick](8,0)--(10,0);
\draw[thick] (0,-.1) node[below]{$x_0$} -- (0,0.1);
\draw[thick] (2,-.1) node[below]{$x_1$} -- (2,0.1);
\draw[thick] (4,-.1) node[below]{$x_2$} -- (4,0.1);
\draw[thick] (6,-.1) node[below]{$x_3$} -- (6,0.1);
\draw[thick] (8,-.1) node[below,xshift=.2cm]{$x_{N-1}$} -- (8,0.1);
\draw[thick] (10,-.1) node[below,xshift=.1cm]{$x_{N}$} -- (10,0.1);
\node at (0.33,0)[circle,fill=black,inner sep=0pt,minimum size=4pt]{};
\node at (0.33,0)[above]{$s^{(0)}_1$};
\node at (1.67,0)[circle,fill=black,inner sep=0pt,minimum size=4pt]{};
\node at (1.67,0)[above]{$s^{(0)}_2$};
\node at (2.33,0)[circle,fill=black,inner sep=0pt,minimum size=4pt]{};
\node at (2.33,0)[above]{$s^{(1)}_1$};
\node at (3.67,0)[circle,fill=black,inner sep=0pt,minimum size=4pt]{};
\node at (3.67,0)[above]{$s^{(1)}_2$};
\node at (4.33,0)[circle,fill=black,inner sep=0pt,minimum size=4pt]{};
\node at (4.33,0)[above]{$s^{(2)}_1$};
\node at (5.67,0)[circle,fill=black,inner sep=0pt,minimum size=4pt]{};
\node at (5.67,0)[above]{$s^{(2)}_2$};
\node at (8.33,0)[circle,fill=black,inner sep=0pt,minimum size=4pt]{};
\node at (8.33,0)[above, xshift=.3cm]{$s^{(N-1)}_1$};
\node at (9.67,0)[circle,fill=black,inner sep=0pt,minimum size=4pt]{};
\node at (9.67,0)[above, xshift=.3cm]{$s^{(N-1)}_2$};
\node at (1.0,0)[below]{$\tilde{s}^{(0)}_1$};
\node at (3.0,0)[below]{$\tilde{s}^{(1)}_1$};
\node at (5.0,0)[below]{$\tilde{s}^{(2)}_1$};
\node at (9.0,0)[below, xshift=.3cm]{$\tilde{s}^{(N-1)}_1$};
\node at (1,0)[circle,fill=gray,inner sep=0pt,minimum size=4pt]{};
\node at (3,0)[circle,fill=gray,inner sep=0pt,minimum size=4pt]{};
\node at (5,0)[circle,fill=gray,inner sep=0pt,minimum size=4pt]{};
\node at (9,0)[circle,fill=gray,inner sep=0pt,minimum size=4pt]{};
\end{tikzpicture}
\caption{1D uniform grid $\Omega_r=\{x_0,\ldots,x_N \} \in \Omega = [x_0, x_N]$.  Black circles represent the locations of points $s_{j}^{(i)} \in \Omega_s$, and
the combination of black and grey circles correspond to the uniform grid $ \tilde \Omega_s $. } \label{fig:1D_grid}
\end{figure}

Consider a uniform grid $\Omega_r=\{x_0,\ldots,x_N\}$ on the interval $\Omega = [x_0, x_N]$ with a grid size $h = (x_N-x_0)/N$, and sets of points $S_i = \{s_{1}^{(i)},\ldots,s_{\ell}^{(i)}\} \subset \Omega_i = [x_i, x_{i+1}]$ with $s_{j}^{(i)} = x_i + \delta_j$, where $0 \leq \delta_1 < \delta_2 < \cdots <\delta_\ell < h$ (see Figure \ref{fig:1D_grid}). Note that $\delta_j$ are independent of the index $i$. The grid $\Omega_s := \bigcup\limits_{i = 0}^{N-1}S_i$ is, in general, non-uniform (it is uniform if $\ell=1$) but it is block-uniform, i.e., the distribution of points in each sub-interval (in other words, block) $\Omega_i$ is the same.

Let $\ZZ(x)$, $x \in \R$, be a stationary Gaussian random field  with zero mean and covariance function $r(x)$. Our aim is to sample from $\ZZ(x)$ on the grid $\Omega_s$. If $\Omega_s$ is not a grid of equispaced points, then the covariance matrix of the field $\ZZ(x)$ on $\Omega_s$ is not Toeplitz. In this case the standard CEM \cite{wood:1994,dietrich:1997,wood:1999} cannot be applied to this covariance matrix in order to perform highly efficient computing of its square-root with subsequent generation of the required Gaussian field samples. The simplest remedy is to extend the non-uniform grid $\Omega_s$ to the uniform grid $\tilde \Omega_s$ by adding points (see Figure~\ref{fig:1D_grid}) and then apply the standard circulant embedding method, but this approach results in a substantial increase of computational costs. In this paper, we propose a different approach which does not need in adding points to $\Omega_s$ and which is cheaper than the use of the standard CEM on the extended uniform grid $\tilde \Omega_s$.

Consider the covariance matrix $R$ of the random vector $\ZZ(s_{j}^{(i)})$, $s_{j}^{(i)} \in \Omega_s$, written in the block matrix form:
\begin{equation} \label{eq:1d_covmat}
R =
\left[
\begin{array}{c c c c c}
R_{0,0} & R_{0,1} & R_{0,2} & \cdots & R_{0,N-1} \\
R_{1,0} & R_{1,1} & R_{1,2} &\cdots & R_{1,N-1}\\
R_{2,0} & R_{2,1} & R_{2,2} &\cdots & R_{2,N-1}\\
\vdots & \vdots & \vdots & \ddots & \vdots\\
R_{N-1,0} & R_{N-1,1} &R_{N-1,2} &\cdots& R_{N-1,N-1}
\end{array}
\right]_{N\ell \times N\ell},
\end{equation}
where each block matrix $R_{i,k}$ is defined as
\begin{equation}
R_{i,k} = \left[ r(|s^{(i)}_j-s^{(k)}_l|)\right]_{1 \leq j,l \leq \ell}.
\end{equation}

Now note that, by construction,
\begin{equation} \label{eq:property_block_toeplitz}
R_{i,j} =  \begin{cases}
 R_{k,l} & \text{if $j-i = l-k$,} \\
 R_{k,l}^T & \text{if $j-i = k-l$.}
\end{cases}
\end{equation}
Property (\ref{eq:property_block_toeplitz}) implies that the covariance matrix $R$ from (\ref{eq:1d_covmat}) can be uniquely determined by its first block row and hence it is symmetric and block Toeplitz, having identical blocks along diagonals. Then $R$ can be rewritten as
\begin{equation} \label{eq:1d_covmat2}
R =
\left[
\begin{array}{c c c c c}
R_{0,0} & R_{0,1} & R_{0,2} & \cdots & R_{0,N-1} \\
R_{0,1}^T & R_{0,0} & R_{0,1} &\cdots & R_{0,N-2}\\
R_{0,2}^T & R_{0,1}^T & R_{0,0} &\cdots & R_{0,N-3}\\
\vdots & \vdots & \vdots&\ddots & \vdots\\
R_{0,N-1}^T & R_{0,N-2}^T &R_{0,N-3}^T &\cdots& R_{0,0}
\end{array}
\right].
\end{equation}

We now illustrate how CEM \cite{wood:1994,dietrich:1997,wood:1999} can be extended so that its new version, BCEM, is applicable
to the symmetric block Toeplitz matrix $R$ from (\ref{eq:1d_covmat2}). To this end, we embed $R$ in the $m\ell \times m\ell$ symmetric block Toeplitz matrix $C$ for some even integer $m\geq 2N$:
\begin{equation} \label{eq:block_circ}
C = \left[
\begin{array}{c c c c}
C_0 & C_1 & \cdots & C_{m-1} \\
C_{m-1} & C_0 & \cdots & C_{m-2}\\
\vdots & \vdots & \ddots & \vdots\\
C_1 & C_2 & \cdots &C_0
\end{array}
\right] ,
\end{equation}where
\begin{equation}
C_k =  \left[ r(g(|s^{(0)}_i-s^{(k)}_j|))\right]_{1 \leq i,j \leq \ell}
\end{equation}
and
\begin{equation} \label{eq:def_ext_covf}
g(x) =  \begin{cases}
 x & \text{if $x < mh/2$,} \\
 mh - x & \text{if $x \geq mh/2$.}
\end{cases}
\end{equation}
Note that $C_i = R_{0,i}$ for $0\leq i \leq N-1$ and that $C$ is the covariance matrix for $\ZZ(x)$ defined in the circular manner on the grid $\Omega^E_s = \bigcup\limits_{i = 0}^{m-1}S_i \subset \Omega^E=[x_0,x_m]$, where $x_m=x_0+mh$ and $S_i$ are defined in the same way as before.

It is not difficult to see that the matrix $C$ has the following properties
\begin{align}
C_0 &= C_0^T, \label{eq:sym_c0}\\
C_k &= C_{m-k}^T,\text{ for } 1 \leq k \leq \frac{m}{2}. \label{eq:sym_cb}
\end{align}
The properties (\ref{eq:sym_c0}) and (\ref{eq:sym_cb}) imply that $C$ is a symmetric block circulant matrix.

Let $F_B$ be the tensor product of a one-dimensional discrete Fourier matrix $F_m^1$ of order $m$ and an identity matrix $\II_\ell$ of size $\ell\times \ell$:
\[
 F_B = F_m^1 \otimes \II_\ell =
 \left[
\begin{array}{c c c c c}
\II_\ell & \II_\ell & \II_\ell &\cdots & \II_\ell \\
\omega_0\II_\ell & \omega_1 \II_\ell& \omega_2 \II_\ell & \cdots & \omega_{m-1}\II_\ell\\
\omega_0^2\II_\ell & \omega_1^2 \II_\ell& \omega_2^2 \II_\ell & \cdots & \omega_{m-1}^2\II_\ell\\
\vdots&\vdots&\vdots&\ddots&\vdots\\
\omega_0^{m-1}\II_\ell & \omega_1^{m-1} \II_\ell& \omega_2^{m-1} \II_\ell & \cdots & \omega_{m-1}^{m-1}\II_\ell
\end{array}
\right].
\]
The matrix $C$ is unitarily block diagonalizable by $F_B$ \cite{davis:1979, wood:1999}, i.e., there exists $\ell \times \ell$ matrices $\Lambda_k$, $k = 0,1,\ldots, m-1$, such that
\begin{equation} \label{eq:block_factorisation}
 C  = \frac{1}{m} F_B\Lambda F_B^H, \quad \text{where } \Lambda = \left[
\begin{array}{c c c c}
\Lambda_0 & 0 & \cdots & 0 \\
0 & \Lambda_1 & \cdots & 0\\
\vdots&\vdots&\ddots&\vdots\\
0 & 0 & \cdots & \Lambda_{m-1}
\end{array}
\right].
\end{equation}
Here $H$ denotes the conjugate transpose. Similarly to the eigenvalue decomposition of a symmetric circulant matrix whose eigenvalues can be calculated by performing a discrete Fourier transform of its first row (or column), the block matrices on the diagonal of $\Lambda$ can be computed as
\begin{equation}\label{eq:eigdecmp_bc}
\left[C_0 \quad C_1 \cdots C_{m-1} \right] F_B = \left[\Lambda_0 \quad \Lambda_1 \cdots \Lambda_{m-1} \right]
\end{equation}
or in the component-wise form:
\begin{equation} \label{eq:eigdecmp_pointwise}
\left[C_0^{i,j} \quad C_1^{i,j} \cdots C_{m-1}^{i,j} \right] F_m^1 = \left[\Lambda_0^{i,j} \quad \Lambda_1^{i,j} \cdots \Lambda_{m-1}^{i,j} \right], \text{ where } 1 \leq i,j \leq \ell.
\end{equation}

Since the block circulant matrix $C$ is real and symmetric, $\Lambda_k$ are Hermitian. Furthermore, all the diagonal elements of $\Lambda_k$ are equal. Therefore, only  $\ell(\ell+1)/2 - (\ell-1)$ applications of $F_m^1$ are required for computing $\Lambda$.

\begin{remark}
Consider a uniform grid, i.e., a block-regular grid with the size of regular grid being a multiple of the number of blocks or, in other words, the points in each
$S_i$ being uniformly located. Then BCEM is applicable on the uniform grid (recall that CEM works on regular grids only). Since the covariance depends on the distance between points only, the block circulant matrix $C$ on the uniform grid satisfies the relationship
\[
C^{a,b}_k = C^{c,d}_k \text{ if }|a-b| = |c-d|.
\]
Consequently, $\Lambda_k$ in (\ref{eq:eigdecmp_bc}) are Toeplitz and the number of 1D FFT $F_m^1$ to compute distinctive values of $\Lambda$ is equal to $\ell$.
Thus, in BCEM the block circulant matrix can be diagonalized by using $\ell$ FFTs of order $m$ followed by using a Cholesky decomposition of the block diagonal matrix $\Lambda$, whose block entries are of size $\ell\times\ell$. For small $\ell$, the overall computational cost is dominated by $\mathcal{O}(\ell m \log_2 m)$. On the other hand, the complexity of CEM is dominated by FFTs of order $m\ell$, which gives the overall cost $\mathcal{O}(m\ell \update{2.2}{\log_2} m\ell)$. Hence, BCEM can outperform CEM on the uniform grid, where both CEM and BCEM use the same covariance matrix (see also Remark~\ref{rem41}).
\end{remark}

The symmetricity of $C$ also guaranties the spectral decomposition
\begin{equation} \label{eq:decomp_lambdak}
\Lambda_k = U_k D_k U_k^H,
\end{equation}
where $U_k$ is unitary and $D_k$ is a real-valued diagonal matrix. The following proposition implies that $\Lambda$ from (\ref{eq:block_factorisation}) can be decomposed with $m/2+1$ applications of the spectral decompositions (\ref{eq:decomp_lambdak}).

\begin{proposition} \label{prop:symmetricity}
The block diagonal matrix
$\Lambda$ from (\ref{eq:block_factorisation}) has the property
\begin{equation}\label{eq:symmetricity}
\Lambda_k = \overline{\Lambda_{m-k}} \text{, for } 1 \leq k \leq \frac{m}{2},
\end{equation}
where the bar denotes the matrix with conjugate complex entries.
\end{proposition}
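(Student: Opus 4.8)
The plan is to read off an explicit DFT formula for the diagonal blocks $\Lambda_k$ from \eqref{eq:eigdecmp_bc}, and then exploit two elementary facts: the blocks $C_j$ are \emph{real} matrices, and the $m$-th roots of unity satisfy $\overline{\omega^{jk}} = \omega^{-jk} = \omega^{j(m-k)}$.

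First I would fix notation: let $\omega = e^{2\pi i/m}$ be a primitive $m$-th root of unity (replaced by its conjugate if that is the sign convention implicit in the displayed form of $F_m^1$). Reading the $k$-th block column of \eqref{eq:eigdecmp_bc} — or, equivalently, the scalar identity \eqref{eq:eigdecmp_pointwise} entry by entry — gives
$\Lambda_k = \sum_{j=0}^{m-1} C_j\,\omega^{jk}$ for every $0\le k\le m-1$.

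Next I would observe that, since the covariance function $r$ is real-valued, each block $C_k=\big[r\big(g(|s^{(0)}_i-s^{(k)}_j|)\big)\big]_{1\le i,j\le\ell}$ is a real $\ell\times\ell$ matrix. Hence entrywise complex conjugation commutes with the finite sum, and
$\overline{\Lambda_k}=\sum_{j=0}^{m-1}C_j\,\overline{\omega^{jk}}=\sum_{j=0}^{m-1}C_j\,\omega^{-jk}$.
Using $\omega^{m}=1$ we have $\omega^{-jk}=\omega^{j(m-k)}$, so $\overline{\Lambda_k}=\sum_{j=0}^{m-1}C_j\,\omega^{j(m-k)}=\Lambda_{m-k}$; taking conjugates of both sides yields $\Lambda_k=\overline{\Lambda_{m-k}}$, which is precisely \eqref{eq:symmetricity} for $1\le k\le m/2$.

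I do not expect a genuine obstacle here; the only point needing a little care is to make the sign convention in the definition of $F_m^1$ consistent with the formula $\Lambda_k=\sum_j C_j\,\omega^{jk}$, but the conclusion is insensitive to that choice. An alternative, entirely equivalent phrasing is to work purely at the level of scalars: \eqref{eq:eigdecmp_pointwise} says each sequence $\big(\Lambda_k^{i,j}\big)_{k=0}^{m-1}$ is the discrete Fourier transform of the real sequence $\big(C_k^{i,j}\big)_{k=0}^{m-1}$, and the DFT of a real sequence is conjugate-symmetric under $k\mapsto m-k$. (As a consistency check, combining $\overline{\Lambda_k}=\Lambda_{m-k}$ with $C_j=C_{m-j}^T$ from \eqref{eq:sym_cb} recovers the Hermitian property of $\Lambda_k$ used earlier, but this is not needed for the proposition.)
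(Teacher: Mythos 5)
Your proposal is correct and follows essentially the same route as the paper: both write $\Lambda_k$ as the DFT $\sum_{j}\omega_k^{\,j}C_j$ of the first block row (via \eqref{eq:eigdecmp_pointwise}), use that the blocks $C_j$ are real so conjugation passes through the sum, and invoke $\overline{\omega_{m-k}}=\omega_k$ to conclude. Your side remark that \eqref{eq:sym_cb} is not actually needed for the proposition is accurate --- the paper cites it in its proof but only the realness of the $C_j$ and the conjugate symmetry of the roots of unity are used.
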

\begin{proof}
Let $\omega_n = \exp(\frac{2\pi n}{m}i) $ be a root of unity. Then (see (\ref{eq:sym_cb}) and (\ref{eq:eigdecmp_pointwise})):
\begin{align*}
\overline{\Lambda_{m-k}} &= C_0 + \overline{\omega_{m-k}}C_1 + \cdots + \overline{\omega_{m-k}}^{m-1}C_{m-1} \\
&= C_0 + \omega_{k}C_1 + \cdots + \omega_{k}^{m-1}C_{m-1} \\
& = \Lambda_k.
\end{align*}
\end{proof}

It follows from (\ref{eq:block_factorisation}) and (\ref{eq:decomp_lambdak}) that $C$ has the eigenvalue decomposition
\begin{equation}\label{eq:eigdecomp}
C = 	\frac{1}{m}\left(F_BU\right)D\left(F_BU\right)^H,
\end{equation}
where the unitary block-diagonal matrix $U$ and the diagonal matrix $D$ are of the form
\[
 U = \left[
\begin{array}{c c c c}
U_0 & 0 & \cdots & 0 \\
0 & U_1 & \cdots & 0\\
\vdots&\vdots&\ddots&\vdots\\
0 & 0 & \cdots & U_{m-1}
\end{array}
\right]
\text{ and }
D = \left[
\begin{array}{c c c c}
D_0 & 0 & \cdots & 0 \\
0 & D_1 & \cdots & 0\\
\vdots&\vdots&\ddots&\vdots\\
0 & 0 & \cdots & D_{m-1}
\end{array}
\right].
\]
We note that C is non-negative definite if and only if $D_k^{i,i} \geq 0$ for each $0\leq k \leq m-1$ and $1 \leq i \leq \ell$.

Assume for the moment that all the eigenvalues of $C$ are non-negative. Let two independent random vectors $\xi_1$ and $\xi_2$, each of size $m$, be normally distributed $\mathcal{N}(O,\II_m)$, i.e., $\EE[\xi_i \xi_j^T] = \delta_{ij}\II_m$, where $\delta_{ij}$ denotes the Kronecker delta. Set $\eta =U(D/m)^{\frac{1}{2}}(\xi_1+i\xi_2)$. Then the real and imaginary parts of the vector $\zeta:=F_B\eta$ give two independent random vectors $\zeta_1$ and $\zeta_2$ that are both distributed as $N(0,C)$. Since $R$ is embedded in $C$, the corresponding parts of $\zeta_1$ and $\zeta_2$ are distributed as $\mathcal{N}(O,R)$. Note that the matrix-vector multiplication $F_B \eta$ can be calculated component-wise by $\ell$ applications of $F_m^1$.

The algorithm described above depends on non-negative definiteness of the symmetric block circulant matrix $C$. The sufficient conditions for symmetric circulant matrices to have all non-negative eigenvalues were developed for 1D case in \cite{dietrich:1997} and \cite{wood:1994}. Here we extend these conditions to the symmetric block circulant matrix $C$ from (\ref{eq:block_circ}). To this end, introduce a uniform grid $\tilde\Omega_s$ such that $\Omega_s \subset \tilde\Omega_s$ and consider the covariance matrix $\tilde R$ defined on $\tilde\Omega_s$.  As $\Omega_s$ is a subset of $\tilde\Omega_s$, $R$ is a sub-matrix of the matrix $\tilde R$. Let a uniform grid $\tilde\Omega^E_s$ contain all points of $\Omega_s^E = \bigcup\limits_{i = 0}^{m-1}S_i$. Then the symmetric block circulant matrix $C$ is a sub-matrix of a symmetric circulant matrix $\tilde C$:
\begin{equation}
\tilde C^{i,j} = \left[r(g(|x_i-x_j|))\right],
\end{equation}
where the function $g(x)$ is as in (\ref{eq:def_ext_covf}) and $x_i, x_j \in \tilde \Omega^E_s$. Therefore, there exists an injection matrix $P^T$ such that
\begin{equation} \label{eq:relationC}
C = P^T\tilde{C}P.
\end{equation}
 An injection matrix can be built by eliminating rows of the identity matrix, which correspond to points not in $\Omega^E_s$. For instance,
\[
  \left[
\begin{array}{c c c c c}
1 & 0 & 0 & 0 & 0 \\
0 & 0 & 1 & 0 & 0 \\
0 & 0 & 0 & 0 & 1
\end{array}
\right]
\]
is an injection matrix from $\{x_1, x_2, x_3, x_4, x_5 \}$  to $\{x_1, x_3, x_5\}$.
The relationship (\ref{eq:relationC}) leads to the following proposition.

\begin{proposition}
If $\tilde C$ is non-negative definite, then so is $C$.
\end{proposition}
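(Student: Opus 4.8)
The plan is to obtain the conclusion immediately from the congruence relation $C = P^{T}\tilde C P$ in (\ref{eq:relationC}), using only the elementary fact that a congruence (equivalently, a compression) of a non-negative definite matrix is again non-negative definite. First I would fix an arbitrary real vector $v \in \R^{m\ell}$ and write
\begin{equation*}
v^{T} C v = v^{T} P^{T} \tilde C P v = (Pv)^{T}\, \tilde C\, (Pv).
\end{equation*}
Since $Pv$ is a real vector in the ambient space on which $\tilde C$ acts, and $\tilde C$ is by hypothesis symmetric and non-negative definite, the right-hand side is $\geq 0$. As $v$ was arbitrary, and $C$ is symmetric (being a real symmetric block circulant matrix, cf. (\ref{eq:sym_c0})--(\ref{eq:sym_cb})), this shows that $C$ is non-negative definite.

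A second, equivalent way to present it — perhaps worth a remark — is to observe that, because $P^{T}$ is an injection matrix obtained by deleting rows of an identity matrix, $C = P^{T}\tilde C P$ is, up to a common permutation of its rows and columns, the principal submatrix of $\tilde C$ indexed by the points of $\Omega_s^{E}$ sitting inside $\tilde\Omega_s^{E}$. Every principal submatrix of a non-negative definite matrix is non-negative definite, which gives the claim directly.

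There is essentially no obstacle here: the only points to keep straight are that the vector $Pv$ is real (so that non-negative definiteness of the real symmetric matrix $\tilde C$ applies), which is immediate since $P$ has $0/1$ entries and $v$ is real, and that the dimensions are compatible, which is guaranteed by the construction of $P$ in the paragraph preceding (\ref{eq:relationC}). Combined with the earlier observation that $C$ is non-negative definite iff $D_k^{i,i}\geq 0$ for all $0\le k\le m-1$, $1\le i\le \ell$, this means the sampling algorithm of the previous section is applicable whenever the (easily generated) circulant matrix $\tilde C$ is non-negative definite, which is exactly the role this proposition plays in reducing BCEM's positivity check to the classical CEM one.
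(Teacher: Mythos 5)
Your proof is correct and follows exactly the route the paper intends: the paper derives the proposition directly from the congruence $C = P^{T}\tilde{C}P$ in (\ref{eq:relationC}), which is precisely your computation $v^{T}Cv = (Pv)^{T}\tilde{C}(Pv) \geq 0$. The principal-submatrix remark is an equivalent restatement of the same observation, so nothing differs in substance.
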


When the circulant matrix $\tilde C$ fails to be non-negative definite, Wood and Chen \cite{wood:1994} suggested to increase the size of $\tilde C$ until it becomes non-negative definite (the so-called padding technique). From the relationship (\ref{eq:relationC}) between $C$ and $\tilde C$, the same strategy can be used for the matrix $C$. That is, increase $m$ until $C$ becomes non-negative definite. Therefore, the number of blocks $m$, which is required for $C$ to be non-negative definite, depends on the grid size of the uniform grid $\tilde \Omega^E_s$, not on the number of points in $\Omega_s$. Thus, the number of paddings needed for BCEM is the same as for CEM (see also Remark~\ref{rempad}).

%%%%%%%%%%%%%%%%%%%%%%%%%%%%%%%%%% Section 3 %%%%%%%%%%%%%%%%%%%%%%%%%%%%%%%%%%%%%%%%%%%%%%%%%%%%%%%%%%%%%%%%%%%
\section{Multidimensional BCEM}\label{section:md}
In the previous section we illustrated the idea of BCEM in the simpler setting of 1D space.
In this section we present multi-dimensional BCEM which computational complexity is discussed in the next section.

\begin{figure}[h!]
\centering
\begin{tikzpicture}
\tikzstyle{spt} = [circle,fill=black,inner sep=0pt,minimum size=2pt]

\filldraw[fill=gray!10!white,fill opacity=0.1] (0,0) rectangle (11.99,11.99);
\filldraw[fill=gray!10] (0,0) rectangle (5.99,5.99);
\draw[step=1.5cm,gray,very thin] (0,0) grid (12,12);

\node at (0,0)[]{$x_{\jj_0}$};
\node at (1.5,0)[]{$x_{\jj_1}$};
\node at (3,0)[]{$x_{\jj_2}$};
\node at (4.5,0)[]{$x_{\jj_3}$};
\node at (6,0)[]{$x_{\jj_4}$};
\node at (7.5,0)[]{$x_{\jj_5}$};
\node at (9,0)[]{$x_{\jj_6}$};
\node at (10.5,0)[]{$x_{\jj_7}$};
\node at (12,0)[]{$x_{\jj_8}$};

\node at (0,1.5)[]{$x_{\jj_9}$};
\node at (1.5,1.5)[]{$x_{\jj_{10}}$};
\node at (3,1.5)[]{$x_{\jj_{11}}$};
\node at (4.5,1.5)[]{$x_{\jj_{12}}$};
\node at (6,1.5)[]{$x_{\jj_{13}}$};
\node at (7.5,1.5)[]{$x_{\jj_{14}}$};
\node at (9,1.5)[]{$x_{\jj_{15}}$};
\node at (10.5,1.5)[]{$x_{\jj_{16}}$};
\node at (12,1.5)[]{$x_{\jj_{17}}$};

\node at (0,3)[]{$x_{\jj_{18}}$};
\node at (1.5,3)[]{$x_{\jj_{19}}$};
\node at (3,3)[]{$x_{\jj_{20}}$};
\node at (4.5,3)[]{$x_{\jj_{21}}$};
\node at (6,3)[]{$x_{\jj_{22}}$};
\node at (7.5,3)[]{$x_{\jj_{23}}$};
\node at (9,3)[]{$x_{\jj_{24}}$};
\node at (10.5,3)[]{$x_{\jj_{25}}$};
\node at (12,3)[]{$x_{\jj_{26}}$};

\node at (0,4.5)[]{$x_{\jj_{27}}$};
\node at (1.5,4.5)[]{$x_{\jj_{28}}$};
\node at (3,4.5)[]{$x_{\jj_{29}}$};
\node at (4.5,4.5)[]{$x_{\jj_{30}}$};
\node at (6,4.5)[]{$x_{\jj_{31}}$};
\node at (7.5,4.5)[]{$x_{\jj_{32}}$};
\node at (9,4.5)[]{$x_{\jj_{33}}$};
\node at (10.5,4.5)[]{$x_{\jj_{34}}$};
\node at (12,4.5)[]{$x_{\jj_{35}}$};

\node at (0,6)[]{$x_{\jj_{36}}$};
\node at (1.5,6)[]{$x_{\jj_{37}}$};
\node at (3,6)[]{$x_{\jj_{38}}$};
\node at (4.5,6)[]{$x_{\jj_{39}}$};
\node at (6,6)[]{$x_{\jj_{40}}$};
\node at (7.5,6)[]{$x_{\jj_{41}}$};
\node at (9,6)[]{$x_{\jj_{42}}$};
\node at (10.5,6)[]{$x_{\jj_{43}}$};
\node at (12,6)[]{$x_{\jj_{44}}$};

\node at (0,7.5)[]{$x_{\jj_{45}}$};
\node at (1.5,7.5)[]{$x_{\jj_{46}}$};
\node at (3,7.5)[]{$x_{\jj_{47}}$};
\node at (4.5,7.5)[]{$x_{\jj_{48}}$};
\node at (6,7.5)[]{$x_{\jj_{49}}$};
\node at (7.5,7.5)[]{$x_{\jj_{50}}$};
\node at (9,7.5)[]{$x_{\jj_{51}}$};
\node at (10.5,7.5)[]{$x_{\jj_{52}}$};
\node at (12,7.5)[]{$x_{\jj_{53}}$};

\node at (0,9)[]{$x_{\jj_{54}}$};
\node at (1.5,9)[]{$x_{\jj_{55}}$};
\node at (3,9)[]{$x_{\jj_{56}}$};
\node at (4.5,9)[]{$x_{\jj_{57}}$};
\node at (6,9)[]{$x_{\jj_{58}}$};
\node at (7.5,9)[]{$x_{\jj_{59}}$};
\node at (9,9)[]{$x_{\jj_{60}}$};
\node at (10.5,9)[]{$x_{\jj_{61}}$};
\node at (12,9)[]{$x_{\jj_{62}}$};

\node at (0,10.5)[]{$x_{\jj_{63}}$};
\node at (1.5,10.5)[]{$x_{\jj_{64}}$};
\node at (3,10.5)[]{$x_{\jj_{65}}$};
\node at (4.5,10.5)[]{$x_{\jj_{66}}$};
\node at (6,10.5)[]{$x_{\jj_{67}}$};
\node at (7.5,10.5)[]{$x_{\jj_{68}}$};
\node at (9,10.5)[]{$x_{\jj_{69}}$};
\node at (10.5,10.5)[]{$x_{\jj_{70}}$};
\node at (12,10.5)[]{$x_{\jj_{71}}$};

\node at (0,12)[]{$x_{\jj_{72}}$};
\node at (1.5,12)[]{$x_{\jj_{73}}$};
\node at (3,12)[]{$x_{\jj_{74}}$};
\node at (4.5,12)[]{$x_{\jj_{75}}$};
\node at (6,12)[]{$x_{\jj_{76}}$};
\node at (7.5,12)[]{$x_{\jj_{77}}$};
\node at (9,12)[]{$x_{\jj_{78}}$};
\node at (10.5,12)[]{$x_{\jj_{79}}$};
\node at (12,12)[]{$x_{\jj_{80}}$};

\node at (0,0)[above,xshift=.3cm,yshift=-.1cm]{$\Omega_{\jj_0}$};
\node at (1.5,0)[above,xshift=.3cm,yshift=-.1cm]{$\Omega_{\jj_1}$};
\node at (3,0)[above,xshift=.3cm,yshift=-.1cm]{$\Omega_{\jj_2}$};
\node at (4.5,0)[above,xshift=.3cm,yshift=-.1cm]{$\Omega_{\jj_3}$};
\node at (6,0)[above,xshift=.3cm,yshift=-.1cm]{$\Omega_{\jj_4}$};
\node at (7.5,0)[above,xshift=.3cm,yshift=-.1cm]{$\Omega_{\jj_5}$};
\node at (9,0)[above,xshift=.3cm,yshift=-.1cm]{$\Omega_{\jj_6}$};
\node at (10.5,0)[above,xshift=.3cm,yshift=-.1cm]{$\Omega_{\jj_7}$};

\node at (0,1.5)[above,xshift=.3cm,yshift=-.1cm]{$\Omega_{\jj_{9}}$};
\node at (1.5,1.5)[above,xshift=.3cm,yshift=-.1cm]{$\Omega_{\jj_{10}}$};
\node at (3,1.5)[above,xshift=.3cm,yshift=-.1cm]{$\Omega_{\jj_{11}}$};
\node at (4.5,1.5)[above,xshift=.3cm,yshift=-.1cm]{$\Omega_{\jj_{12}}$};
\node at (6,1.5)[above,xshift=.3cm,yshift=-.1cm]{$\Omega_{\jj_{13}}$};
\node at (7.5,1.5)[above,xshift=.3cm,yshift=-.1cm]{$\Omega_{\jj_{14}}$};
\node at (9,1.5)[above,xshift=.3cm,yshift=-.1cm]{$\Omega_{\jj_{15}}$};
\node at (10.5,1.5)[above,xshift=.3cm,yshift=-.1cm]{$\Omega_{\jj_{16}}$};

\node at (0,3)[above,xshift=.3cm,yshift=-.1cm]{$\Omega_{\jj_{18}}$};
\node at (1.5,3)[above,xshift=.3cm,yshift=-.1cm]{$\Omega_{\jj_{19}}$};
\node at (3,3)[above,xshift=.3cm,yshift=-.1cm]{$\Omega_{\jj_{20}}$};
\node at (4.5,3)[above,xshift=.3cm,yshift=-.1cm]{$\Omega_{\jj_{21}}$};
\node at (6,3)[above,xshift=.3cm,yshift=-.1cm]{$\Omega_{\jj_{22}}$};
\node at (7.5,3)[above,xshift=.3cm,yshift=-.1cm]{$\Omega_{\jj_{23}}$};
\node at (9,3)[above,xshift=.3cm,yshift=-.1cm]{$\Omega_{\jj_{24}}$};
\node at (10.5,3)[above,xshift=.3cm,yshift=-.1cm]{$\Omega_{\jj_{25}}$};

\node at (0,4.5)[above,xshift=.3cm,yshift=-.1cm]{$\Omega_{\jj_{27}}$};
\node at (1.5,4.5)[above,xshift=.3cm,yshift=-.1cm]{$\Omega_{\jj_{28}}$};
\node at (3,4.5)[above,xshift=.3cm,yshift=-.1cm]{$\Omega_{\jj_{29}}$};
\node at (4.5,4.5)[above,xshift=.3cm,yshift=-.1cm]{$\Omega_{\jj_{30}}$};
\node at (6,4.5)[above,xshift=.3cm,yshift=-.1cm]{$\Omega_{\jj_{31}}$};
\node at (7.5,4.5)[above,xshift=.3cm,yshift=-.1cm]{$\Omega_{\jj_{32}}$};
\node at (9,4.5)[above,xshift=.3cm,yshift=-.1cm]{$\Omega_{\jj_{33}}$};
\node at (10.5,4.5)[above,xshift=.3cm,yshift=-.1cm]{$\Omega_{\jj_{34}}$};

\node at (0,6)[above,xshift=.3cm,yshift=-.1cm]{$\Omega_{\jj_{36}}$};
\node at (1.5,6)[above,xshift=.3cm,yshift=-.1cm]{$\Omega_{\jj_{37}}$};
\node at (3,6)[above,xshift=.3cm,yshift=-.1cm]{$\Omega_{\jj_{38}}$};
\node at (4.5,6)[above,xshift=.3cm,yshift=-.1cm]{$\Omega_{\jj_{39}}$};
\node at (6,6)[above,xshift=.3cm,yshift=-.1cm]{$\Omega_{\jj_{40}}$};
\node at (7.5,6)[above,xshift=.3cm,yshift=-.1cm]{$\Omega_{\jj_{41}}$};
\node at (9,6)[above,xshift=.3cm,yshift=-.1cm]{$\Omega_{\jj_{42}}$};
\node at (10.5,6)[above,xshift=.3cm,yshift=-.1cm]{$\Omega_{\jj_{43}}$};

\node at (0,7.5)[above,xshift=.3cm,yshift=-.1cm]{$\Omega_{\jj_{45}}$};
\node at (1.5,7.5)[above,xshift=.3cm,yshift=-.1cm]{$\Omega_{\jj_{46}}$};
\node at (3,7.5)[above,xshift=.3cm,yshift=-.1cm]{$\Omega_{\jj_{47}}$};
\node at (4.5,7.5)[above,xshift=.3cm,yshift=-.1cm]{$\Omega_{\jj_{48}}$};
\node at (6,7.5)[above,xshift=.3cm,yshift=-.1cm]{$\Omega_{\jj_{49}}$};
\node at (7.5,7.5)[above,xshift=.3cm,yshift=-.1cm]{$\Omega_{\jj_{50}}$};
\node at (9,7.5)[above,xshift=.3cm,yshift=-.1cm]{$\Omega_{\jj_{51}}$};
\node at (10.5,7.5)[above,xshift=.3cm,yshift=-.1cm]{$\Omega_{\jj_{52}}$};

\node at (0,9)[above,xshift=.3cm,yshift=-.1cm]{$\Omega_{\jj_{54}}$};
\node at (1.5,9)[above,xshift=.3cm,yshift=-.1cm]{$\Omega_{\jj_{55}}$};
\node at (3,9)[above,xshift=.3cm,yshift=-.1cm]{$\Omega_{\jj_{56}}$};
\node at (4.5,9)[above,xshift=.3cm,yshift=-.1cm]{$\Omega_{\jj_{57}}$};
\node at (6,9)[above,xshift=.3cm,yshift=-.1cm]{$\Omega_{\jj_{58}}$};
\node at (7.5,9)[above,xshift=.3cm,yshift=-.1cm]{$\Omega_{\jj_{59}}$};
\node at (9,9)[above,xshift=.3cm,yshift=-.1cm]{$\Omega_{\jj_{60}}$};
\node at (10.5,9)[above,xshift=.3cm,yshift=-.1cm]{$\Omega_{\jj_{61}}$};

\node at (0,10.5)[above,xshift=.3cm,yshift=-.1cm]{$\Omega_{\jj_{63}}$};
\node at (1.5,10.5)[above,xshift=.3cm,yshift=-.1cm]{$\Omega_{\jj_{64}}$};
\node at (3,10.5)[above,xshift=.3cm,yshift=-.1cm]{$\Omega_{\jj_{65}}$};
\node at (4.5,10.5)[above,xshift=.3cm,yshift=-.1cm]{$\Omega_{\jj_{66}}$};
\node at (6,10.5)[above,xshift=.3cm,yshift=-.1cm]{$\Omega_{\jj_{67}}$};
\node at (7.5,10.5)[above,xshift=.3cm,yshift=-.1cm]{$\Omega_{\jj_{68}}$};
\node at (9,10.5)[above,xshift=.3cm,yshift=-.1cm]{$\Omega_{\jj_{69}}$};
\node at (10.5,10.5)[above,xshift=.3cm,yshift=-.1cm]{$\Omega_{\jj_{70}}$};

\foreach \x in {0,1.5,...,10.5}{
	\foreach \y in {0,1.5,...,10.5}{
		\node at (\x+.5,\y+.5)[spt]{};
		\node at (\x+1,\y+.5)[spt]{};
		\node at (\x+.5,\y+1)[spt]{};
		\node at (\x+1,\y+1)[spt]{};
	}
}

\def\a{3}
\def\b{-3.5}
\filldraw[fill=gray!10] (\a,\b) rectangle (\a+6,\b+3);
\draw[-][thick](\a,\b)--(\a+6,\b);
\draw[-][thick](\a+6,\b+3)--(\a,\b+3);
\draw[-][thick](\a,\b)--(\a,\b+3);
\draw[-][thick](\a+6,\b+3)--(\a+6,\b);
\draw[-][thick](\a+3,\b+3)--(\a+3,\b);
\node at (\a+1,\b+1)[spt]{};
\node at (\a+1,\b+2)[spt]{};
\node at (\a+2,\b+1)[spt]{};
\node at (\a+2,\b+2)[spt]{};
\node at (\a+4,\b+1)[spt]{};
\node at (\a+4,\b+2)[spt]{};
\node at (\a+5,\b+1)[spt]{};
\node at (\a+5,\b+2)[spt]{};
\node at (\a+1,\b+1)[xshift=-.2cm,yshift=.1cm]{$s^{\jj_0}_1$};
\node at (\a+1,\b+2)[xshift=-.2cm,yshift=.1cm]{$s^{\jj_0}_3$};
\node at (\a+2,\b+1)[xshift=-.2cm,yshift=.1cm]{$s^{\jj_0}_2$};
\node at (\a+2,\b+2)[xshift=-.2cm,yshift=.1cm]{$s^{\jj_0}_4$};
\node at (\a+4,\b+1)[xshift=-.2cm,yshift=.1cm]{$s^{\jj_1}_1$};
\node at (\a+4,\b+2)[xshift=-.2cm,yshift=.1cm]{$s^{\jj_1}_3$};
\node at (\a+5,\b+1)[xshift=-.2cm,yshift=.1cm]{$s^{\jj_1}_2$};
\node at (\a+5,\b+2)[xshift=-.2cm,yshift=.1cm]{$s^{\jj_1}_4$};
\draw (1.5,0.75) ellipse (2cm and 1cm) ;
\node at (\a,\b)[above,xshift=.5cm,yshift=-.1cm]{$\Omega_{\jj_{0}}$};
\node at (\a+3,\b)[above,xshift=.5cm,yshift=-.1cm]{$\Omega_{\jj_{1}}$};
\node at (\a,\b)[]{$x_{\jj_{0}}$};
\node at (\a+3,\b)[]{$x_{\jj_{1}}$};
\node at (\a+6,\b)[]{$x_{\jj_{2}}$};
\node at (\a,\b+3)[]{$x_{\jj_{9}}$};
\node at (\a+3,\b+3)[]{$x_{\jj_{10}}$};
\node at (\a+6,\b+3)[]{$x_{\jj_{11}}$};
\node at (1.5,-.25)(start){};
\node at (\a,\b+1.5)(end){};
\draw[->](start).. controls ([xshift=2cm] start) and ([xshift=-1cm] end) ..(end);
\end{tikzpicture}
\caption{The 2D uniform grid  $\Omega_r^E=\{x_{\jj_0}{=}x_\textbf{0}, x_{\jj_1},\ldots,x_{\jj_{\overline {\mm+\mathbf{1}}-1}}{=}x_\mm\}$ on the rectangle $\Omega^E = [x_\textbf{0}[1], x_\mm[1]]\times [x_\textbf{0}[2], x_\mm[2]]$ for $\mm = (8,8)^T$. Nodes $s^{(\jj_k)}_j$ of the block-regular grid $\Omega_s^E$ are represented by black circles. The shaded rectangle corresponds to the computation (i.e, the domain of interest for the problem at hands) domain $\Omega = [x_\textbf{0}[1], x_\NN[1]]\times [x_\textbf{0}[2], x_\NN[2]]$ for $\NN = (4,4)^T$.} \label{fig:2D_grid}
\end{figure}

We start with introducing the notation which largely follows \cite{wood:1999}. Let $\mathbb{Z}^d$ be the set of $d$-vectors with non-negative integer components and $\mathbf{0}$ and $\mathbf{1}$ be the $d$-dimensional vectors whose all components equal to 0 and 1, respectively. For $\ii=(\ii[1],\ldots,\ii[d])^T$, $\jj=(\jj[1],\ldots,\jj[d])^T  \in  \mathbb{Z}^d$, we define addition in $\mathbb{Z}^d$:
\[
\ii + \jj = (\ii[1]+\jj[1], \ldots, \ii[d]+\jj[d])^T
\]
and also the product of elements of $\ii$:
\[
\overline{\ii} = \prod\limits_{k = 1}^d \ii[k].
\]
For any $\jj \in \mathbb{Z}^d$ all components of which are strictly positive, we define the set $\mathcal{I}(\jj)$:
\begin{equation}
\mathcal{I}(\jj) = \{\ii \in \mathbb{Z}^d:0 \leq \ii[k] \leq \jj[k] - 1 \mbox{ for all } 1 \leq k \leq d\}.
\end{equation}
Note that the cardinality of $\mathcal{I}(\jj)$ is equal to $\overline{\jj}$.

Introduce a $d$-dimensional rectangular parallelepiped
\begin{equation}\label{eq:omega1}
\Omega = \prod\limits_{i = 1}^d [x_\textbf{0}[i], x_{\NN}[i]] \subset \R^d,
\end{equation}
where $\NN = (\NN[1],\ldots,\NN[d])^T \in \mathbb{Z}^d$ and the vector $\hh$ with the components $\hh[i] =  (x_{\NN}[i] - x_\textbf{0}[i])/\NN[i]$.  Further, points $x_{\ii_k}=(x_{\ii_k}[1],\ldots,x_{\ii_k}[d])^T$ with $x_{\ii_k}[j] =x_\textbf{0}[j]+\ii_k[j]\hh[j]$ form a
regular grid   $\Omega_r=\{x_{\ii_0},\ldots,x_{\ii_{\overline {\NN+\mathbf{1}}-1}}\}$ on the rectangular parallelepiped $\Omega$ (see Fig.~\ref{fig:2D_grid}).
The domain $\Omega$ in (\ref{eq:omega1}) can be divided into $d$-dimensional rectangular parallelepipeds as
\begin{equation}
\Omega = \bigcup\limits_{\jj_k\in \mathcal{I}(\NN)} \Omega_{\jj_k},
\end{equation}
where
\begin{equation}
\Omega_{\jj_k} = \prod\limits_{i = 1}^d [x_{\jj_k}[i],  x_{\jj_k}[i]+\hh[i]], \, \jj_k \in \mathcal{I}(\NN).
\label{eq:omjj}
\end{equation}
For the purpose of algorithm development, we use a lexicographic ordering of $\jj_k$ with respect to $k$, i.e., row after row and layer after layer (see Fig.~\ref{fig:2D_grid}).

Consider a stationary Gaussian random field $\ZZ(x)$, $x \in \R^d$,  with zero mean and covariance function $r(x)$.
We assume that the problem at our hands is such that we need to sample $\ZZ(x)$ at the nodes $s^{(\jj_k)}_i$ defined as follows (see the examples in the Introduction and also Fig.~\ref{fig:2D_grid}):
\begin{equation}
s^{(\jj_k)}_j = x_{\jj_k} + \delta_j,
\end{equation}
where $\delta_j = (\delta_j[1],\ldots,\delta_j[d])^T$ with $0 \leq \delta_j[i] < \hh[i]$ for $1 \leq i \leq d$. Here $\ell$ is the number of sampling points in each subdomain $\Omega_{\jj_k}$. That is, in each $\Omega_{\jj_k}$ the points from the set $S_{\jj_k} = \{s^{(\jj_k)}_1,\ldots,s^{(\jj_k)}_\ell\}$ are distributed according to the same pattern for all $\jj_k \in \mathcal{I}(\NN)$. Denote the grid:
$$\Omega_s=\bigcup\limits_{\jj_k \in \mathcal{I}(\NN)}\Omega_{\jj_k}.$$

Note that $\delta_j$ are independent of the index vector $\jj_k$. The covariance matrix $R$ of $Z(s^{(\jj_k)}_i)$,  $s^{(\jj_k)}_i \in \Omega_s$, is block-Toeplitz.
In the one-dimensional-case it consists of non-Toeplitz blocks of order $\ell$ (see Section~\ref{section:1d}).
In the $d$-dimensional case with $d>1$, it consists of blocks which have all the properties of a correlation matrix in the $d-1$ dimensional space.
We emphasize that the matrix $R$ is not Toeplitz and hence CEM is not directly applicable here.

Analogously to CEM, in order to build a block-circulant matrix, we consider an extended domain $\Omega^{\text{E}} = \prod\limits_{i = 1}^d [x_\textbf{0}[i], x_{\mm}[i]]$, where  $\mm = (\mm[1],\ldots,\mm[d])^T$ with $\mm[i] \geq 2 \NN[i]$ and $x_{\mm}[i] = x_\textbf{0}[i] + \mm[i]\hh[i]$ for $1\leq i \leq d$. Figure \ref{fig:2D_grid} shows an example of the computation domain $\Omega$ with $\NN = (4,4)^T$ and the extended domain $\Omega^E$ with $\mm = (8,8)^T$. Vectors $\jj_k \in \mathcal{I}(\mm+\mathbf{1})$ form the extended regular grid $\Omega_{\text{r}}^E = \{x_{\jj_k} = (x_{\jj_k}[1],\ldots,x_{\jj_k}[d])^T\mathrel{}|\mathrel{}\jj_k \in \mathcal{I}(\mm+\mathbf{1})\} \subset \Omega^{\text{E}}$. There are $\overline{\mm+\mathbf{1}}$ regular grid points in the set $\Omega_{\text{r}}^E$. The parallelepiped $\Omega^{\text{E}}$ can be divided into $d$-dimensional small parallelepipeds as (see also Fig.~\ref{fig:2D_grid}):
\begin{equation}
\Omega^{\text{E}} = \bigcup\limits_{\jj_k\in \mathcal{I}(\mm)} \Omega_{\jj_k},
\end{equation}
where $\Omega_{\jj_k}$, $\jj_k\in \mathcal{I}(\mm)$, are as in (\ref{eq:omjj}).

We now describe BCEM in the $d$-dimensional case, which is applicable to our block-Toeplitz covariance matrix $R$. In contrast to the 1D setting, where the covariance function is always even because of its symmetry, further classifications of covariance functions are needed in higher dimensional cases. We say that $r$ is \textit{component-wise even} in $i$th coordinate if
\begin{equation}
r(x[1],\ldots,-x[i],\ldots,x[d]) = r(x[1],\ldots,x[i]\ldots,x[d])
\label{even}
\end{equation}
for all $x \in \mathbf{R}^d$; otherwise, we say that $r$ is \textit{component-wise uneven} in some coordinates.

For simplicity of the exposition, let us assume for now that $r(x)$ is component-wise even in all coordinates (we will discuss a modification of BCEM in the uneven case in Remark~\ref{rk:uneven}).

We first build  the block circulant embedding of the block-Toeplitz matrix $R$. Consider the first row of the block circulant matrix $C$, which is an $\ell\times\ell\overline{\mm}$ matrix $C_f$ of the form
\begin{equation} \label{eq:first_block_circ}
C_f= \left[
\begin{array}{c c c c}
C_{\jj_0} & C_{\jj_1} & \cdots & C_{\jj_{\overline{m}-1}}
\end{array}
\right],
\end{equation}
where $(i,j)$-th element of $C_{\jj_k}$ is equal to
\begin{equation} \label{eq:cjb}
C_{\jj_k}^{i,j} = r(g_\mm(s^{(\jj_0)}_i - s^{(\jj_k)}_j))
\end{equation}
and the vector function $g_\mm=(g_\mm^1,\ldots,g_\mm^d)^T$ is defined by
\begin{equation}\label{eq:gmm}
g^i_\mm(x_\jj) = \begin{cases}
 x_{\jj}[i], & \text{if $|x_{\jj}[i]| < \mm[i]\hh[i]/2$,} \\
 \mm[i]\hh[i] - |x_{\jj}[i]|, & \text{otherwise.}
\end{cases}
\end{equation}
The block circulant matrix $C$ is generated by its first row $C_f$ in the usual way. Also note that
\begin{equation}
R = [C_{\jj_k}]_{\jj_k \in \mathcal{I}(\NN)}.
\end{equation}

The block circulant matrix $C$ is block diagonalizable by a block discrete Fourier transform (BDFT) matrix, $F_B = F_\mm^d \otimes I_\ell$, where $F_\mm^d$ is a $d$-dimensional DFT matrix. That is, we have $ C = (1/\overline{\mm}) F_B \Lambda F_B^H$, where $\Lambda = \mathrm{diag}(\Lambda_0,\ldots,\Lambda_\mm)$. The blocks on the diagonal of $\Lambda$ can be found by simply taking BDFT of first block row \cite{davis:1979, wood:1999}. Furthermore, using the fact that $F_B$ is the tensor product involving the identity matrix,
we derive the following component-wise computation:
\begin{equation}\label{eq:fft_pw}
[\Lambda_0^{i,j} \cdots \Lambda_{\overline{\mm}-1}^{i,j}] = \fftd([C_0^{i,j} \cdots C_{\overline{\mm}-1}^{i,j}]),
\end{equation}
where $1 \leq i,j \leq \ell$ and $\fftd$ is the $d$-dimensional FFT. Note that instead of $\mathrm{FFT}_1$ we will write FFT.

Due to the fact that $\Lambda$ is Hermitian and all diagonal entries of $\Lambda_k$ are the same, the required number of $\fftd$ of size $\mm$ (which is equivalent to FFT of order
$\bar \mm$) in (\ref{eq:fft_pw}) is $\ell(\ell+1)/2 - (\ell-1)$.

If $\Lambda$ is positive-definite, the Cholesky decomposition $\Lambda = LL^H$ exists, where $L$ is a block diagonal matrix with each block being a lower triangular matrix. Then, we obtain the decomposition $C = (1/\overline{\mm}) F_B L (F_B L)^H$.

As in the one-dimensional case (see Section~\ref{section:1d}), let $\xi = \xi_1 + i \xi_2$ be a complex-valued random vector of order $\overline{\mm}$ with $\xi_1$ and $\xi_2$ being real, normal random vectors such that $\EE[\xi_i] = 0$ and $\EE[\xi_i \xi_j^T] = \delta_{ij}I$. Set $\widetilde{L} = (1/\overline{\mm})^{1/2}L$ and $\eta = \widetilde{L}\xi$. Multiplying the square root of $C$ by $\xi$, we obtain the complex-valued vector
\begin{equation}
F_B \eta=\zeta = \zeta_1 + i \zeta_2,
\end{equation}
with the properties: $\EE[\zeta_1 \zeta_1^T] = \EE[\zeta_2 \zeta_2^T]=C$ and $\zeta_1$ and $\zeta_2$ are independent. Using tensor-product properties of $F_B$, $\zeta$ can be computed in the component-wise manner:
\begin{equation} \label{eq:componentwise}
[\zeta[i]\quad \zeta[i+\ell]\ldots \zeta[i+(\overline{\mm}-1)\ell] = \fftd([\eta[i]\quad \eta[i+\ell]\ldots \eta[i+(\overline{\mm}-1)\ell])
\end{equation}
for $1 \leq i \leq \ell$.

To summarize, the new BCEM can be presented in the algorithmic form as follows.

\smallskip

\begin{algorithm}[H]
	\caption{Block circulant embedding method (BCEM)}
	\label{alg:bcem}
Given $N\in \mathbb{Z}^d, x_\textbf{0}\in \Omega$, and strictly positive valued vector $\hh \in \mathbf{R}^d$,
	\begin{algorithmic}
	 	\STATE\textit{Step} 1. Choose a vector $\mm\in \mathbb{Z}^d$ such that $\mm[i] \geq 2 \NN[i]$ for all $1 \leq i \leq d$. \label{step01}
		\STATE\textit{Step} 2.  Compute the first block row of the circulant matrix $C$ as described in (\ref{eq:first_block_circ})-(\ref{eq:gmm}). \label{step02}
		\STATE\textit{Step} 3.  Compute the block diagonal matrix $\Lambda = \mathrm{diag}(\Lambda_0, \cdots,\Lambda_{\overline{\mm}-1})$ using (\ref{eq:fft_pw}). \label{step03}
		\STATE\textit{Step} 4.  Compute the square-root of $\Lambda$ applying Cholesky decompositions to diagonal blocks of $\Lambda$:
\begin{equation}
\Lambda = L L^H,
\end{equation}
where $L$ is a block diagonal matrix with lower triangular block of order $\ell$. \label{step04}
		\STATE\textit{Step} 5.   If the Cholesky decomposition fails in \textit{Step} 4, increase $\mm[i]$ by one or more and go to \textit{Step} 2. \label{step05}
		
		\STATE\textit{Step} 6.  Compute $\widetilde{L} = (1/\overline{\mm})^{1/2}L$. \label{step06}
		\STATE\textit{Step} 7. Generate 	a random complex vector of dimension $\overline{\mm}\ell, \xi = \xi_1 + i \xi_2$, with two independent vectors $\xi_1$ and $\xi_2$ being $\mathcal{N}(0,I_{\overline{\mm}\ell})$. Compute $\eta = \widetilde{L}\xi$. \label{step07}
		\STATE\textit{Step} 8. Compute $z = (\zeta[1],\ldots,\zeta[\overline{\mm}\ell])^T$ by applying ç times as in (\ref{eq:componentwise}). \label{step08}
	\end{algorithmic}
\end{algorithm}

\smallskip

Note that if $\ell = 1$, then $\Omega_r$ is regular, $C$ is circular, $\Lambda$ becomes diagonal instead of block diagonal and Algorithm~\ref{alg:bcem} degenerates to the standard CEM.

\begin{remark}\label{rk:uneven}
Algorithm~\ref{alg:bcem} is applicable when the covariance function is component-wise even (\update{2.2}{see (\ref{even})}). Although the covariance function is even by definition, i.e., $r(-x) = r(x)$, it could be component-wise uneven, e.g.,
\[
r(x) = \exp(-x^TAx)  \quad \text{ with } \quad A =
\left[
\begin{array}{c c}
2 & -1 \\
-1 & 2
\end{array}
\right], \,\,
x \in \R^2,
 \]
 is uneven.
 In this case, the matrix $C$ defined by the vector function $g_m$ in (\ref{eq:gmm}) usually does not have block circulant structure because of conflicting definitions at the points $x$ with $x[i] = \mm[i]\hh[i]/2$, if $r$ is uneven in $i$th coordinate. Two adjustments to make CEM work in uneven cases were suggested in \cite{wood:1994}, which can be applied to BCEM by modifying Algorithm~\ref{alg:bcem} as follows.
\\
If $r$ is uneven in the $i$ th coordinate, either
\begin{itemize}
\item[(a)]  choose $\mm[i]$ to be an odd integer, e.g., a power of three;
\end{itemize}
or
\begin{itemize}
\item[(b)]  still choose $\mm[i]$ to be an even integer and define $C_{\jj_k}$ using (\ref{eq:cjb}) and  (\ref{eq:gmm}), except put $r(x) = 0$ for all $x$ such that $|x[i]| = \mm[i]\hh[i]/2$ for some $i$.
\end{itemize}
In either case, the resulting matrix $C$ has a block circulant structure and thus  Algorithm \ref{alg:bcem} can be seamlessly extended to the uneven case with aforementioned modifications.
\end{remark}

\begin{remark}
As mentioned earlier, the matrix $C$ is often negative definite in practical applications of CEM and BCEM. Following \cite{wood:1994},  we increase the matrix $C$ in Algorithm~\ref{alg:bcem} (see its Step 5) until it becomes non-negative definite (the padding technique). The padding technique is universal and usually efficient when the correlation length of a random field is in a range from small to medium relative to the size of a computational domain and the field is not too smooth. Otherwise, the use of the padding technique could be very expensive. There are two recently developed alternatives to padding (a cut-off of the circulant matrix \cite{stein:2002,gneiting:2006} and smoothing window circulant embedding \cite{helgason:2014}), which can deal with the problem of negative definiteness of circulant matrices effectively. The techniques from \cite{stein:2002,gneiting:2006,helgason:2014} are applicable to BCEM as they are for CEM.
\label{rempad}
\end{remark}

The equispaced FFT is highly parallelizable in high dimensions, and its highly scalable implementations are proposed in \cite{frigo:1998, eleftheriou:2005}. This could be beneficial in the standard CEM because its computation is dominated by the FFT.  Still equipped with the parallelism of the FFT, BCEM can be further parallelized in a natural way because the applications of $\fftd$ in \textit{Step} \ref{step03} and \textit{Step} \ref{step08} of Algorithm~\ref{alg:bcem} can be performed separately and simultaneously. Moreover, block-diagonal matrix operations in \textit{Step} \ref{step04} and \textit{Step} \ref{step07} can be performed separately and simultaneously. Therefore, the overall BCEM algorithm contains two-level parallelism, giving us significant advantage over the standard CEM.

As we will see in the next section, BCEM can be faster than CEM both in taking square-roots of the corresponding circulant matrices (performed, of course, only once per the whole Monte Carlo simulation) and in sampling the random field required in each Monte Carlo run. The latter is usually more important in Monte Carlo-type simulations.

\section{Computational Complexity of BCEM}\label{sec:comp}
In this section we analyze the computational complexity of BCEM. To this end, we use the same convention as in Golub and Van Loan \cite{golub:1996} for counting the number of floating point operations: $5m\log_2m$ flops for FFT of size $m$ and $n^3/3$ flops for the Cholesky decomposition of a matrix of order $n$.

\textit{Step} \ref{step03} of Algorithm~\ref{alg:bcem} is the initial factorization of the block circulant matrix $C$ by taking BDFT of its first block row which can be computed using the ordinary DFT in (\ref{eq:fft_pw}) at the cost
\begin{equation}\label{eq:cost1}
 \text{cost}_1 = \left(\frac{\ell(\ell+1)}{2} - (\ell-1)\right) (5 \overline{\mm}\log_2\overline{\mm}) \text{ flops}.
\end{equation}
Here we took into account that each $\Lambda_k$ is Hermitian and its diagonal elements have the same value.

In \textit{Step} \ref{step04}, the square-root operation on the block diagonal matrix $\Lambda$ with $\overline{\mm}$ blocks of order $\ell$ can be performed on each block separately using the Cholesky decomposition method. In Proposition~\ref{prop:symmetricity}, we proved in the one-dimensional case
that $\Lambda$ has pairs of complex-conjugate blocks, $\Lambda_k$ and $\Lambda_{m-k}$, which allows us to compute the square-root of $\Lambda_k$ and use its complex-conjugate as a square-root of its complex-conjugate pair $\Lambda_{m-k}$. This is based on the periodicity and conjugate symmetry of FFT. Hence, Proposition~\ref{prop:symmetricity} can be extended to the higher dimensional cases. Then the matrix $\Lambda$ can be decomposed at the cost
\begin{equation}\label{eq:cost2}
\text{cost}_2 = \prod\limits_{i = 1}^d \left( \frac{\mm[i]}{2}+1\right)\frac{\ell^3}{3} \text{ flops}.
\end{equation}

\smallskip

\begin{remark}
Note that if the nodes of $S_{\jj_k}$ are regularly (uniformly) distributed in $\Omega_{\jj_k}$ for all $\jj_k \in I(\mm)$ which is often the case in applications (see, e.g., Example~1.1), then all blocks on the diagonal of $\Lambda$ are block Toeplitz (Toeplitz). Toeplitz matrix  and block Toeplitz matrix can be decomposed using Schur's algorithm \cite{stewart:1997} and block Schur's algorithm \cite{gallivan:1996}, respectively, which have $O(\ell^2)$ complexity as opposed to $O(\ell^3)$
for the standard Cholesky decomposition. Making use of Schur's algorithms can reduce the cost of Algorithm~\ref{alg:bcem}.
\end{remark}

\smallskip

In \textit{Step} \ref{step08}, computing a realization of $\zeta$ requires block diagonal matrix-vector multiplication $\widetilde{L}\xi$ and $\ell$ applications of FFT of order $\overline{\mm}$ in (\ref{eq:componentwise}) at the cost
\begin{equation}\label{eq:cost3}
\text{cost}_3 = \ell^2\overline{\mm} \text{ flops}
\end{equation}
and
\begin{equation}\label{eq:cost4}
\text{cost}_4 = \ell (5 \overline{\mm} \log_2 \overline{\mm}) \text{ flops},
\end{equation}
respectively.

To conclude, the cost of BCEM is $\mathcal{O}(\ell^3\overline{\mm}+\ell^2 \overline{\mm} \log_2 \overline{\mm})$ flops. In practical applications of BCEM (see, e.g. examples in the Introduction) the size of blocks $\ell$ is relatively small while the number of blocks $\overline{\mm}$ is large. Recall that BCEM is designed for block-regular grids $\Omega_s$. Its main computational advantage in comparison with CEM (which is designed for regular grids) comes from the fact that the use of CEM in the case of simulations on a block-regular grid $\Omega_s$ requires regularization of $\Omega_s$, i.e., adding a significant number of extra nodes which BCEM does not need. Hence BCEM works on a grid with a smaller number of nodes than CEM and needs to generate random vectors $\zeta$ of smaller size than CEM (and hence makes less number of calls to a random number generator to sample $\xi$).

\begin{remark}
It can be shown that the use of BCEM on a {\it regular} grid split in blocks of a size $\ell$ can be more effective in sampling the random field than CEM
but it is computationally more expensive in the matrix decomposition than CEM. The latter can be overcome by exploiting the fact that BCEM is parallelizable in comparison with CEM. Thus BCEM can be more effective than CEM even in the case of regular grids for which CEM is designed. \label{rem41}
\end{remark}

We now compare  computational complexity of BCEM and CEM using the first two examples from the Introduction and the following exponential covariance function (cf. (\ref{eq:cjb})):
\begin{equation}\label{eq:exp_cov}
r(\textbf{x}) = \sigma^2
\mathrm{exp}\left(-\frac{\|\mathbf{x}\|_1}{0.3}\right),
\end{equation}
where $\|\cdot \|_1$ means $\mathrm{L}_1$-norm.
 We note that the circulant matrix $C$ (cf. (\ref{eq:first_block_circ}), (\ref{eq:cjb})) of the size $\mm = 2\NN$ formed by (\ref{eq:exp_cov}) is always positive definite
(see, e.g. \cite{dietrich:1997}).  This means, in particular, that Step \ref{step05} (i.e., padding) of Algorithm~\ref{alg:bcem} is not needed in this case. For simplicity, we consider the domain $\Omega$ to be the unit square in the examples.

\smallskip

\begin{figure}[ht!]
\centering
\includegraphics[width=\textwidth]{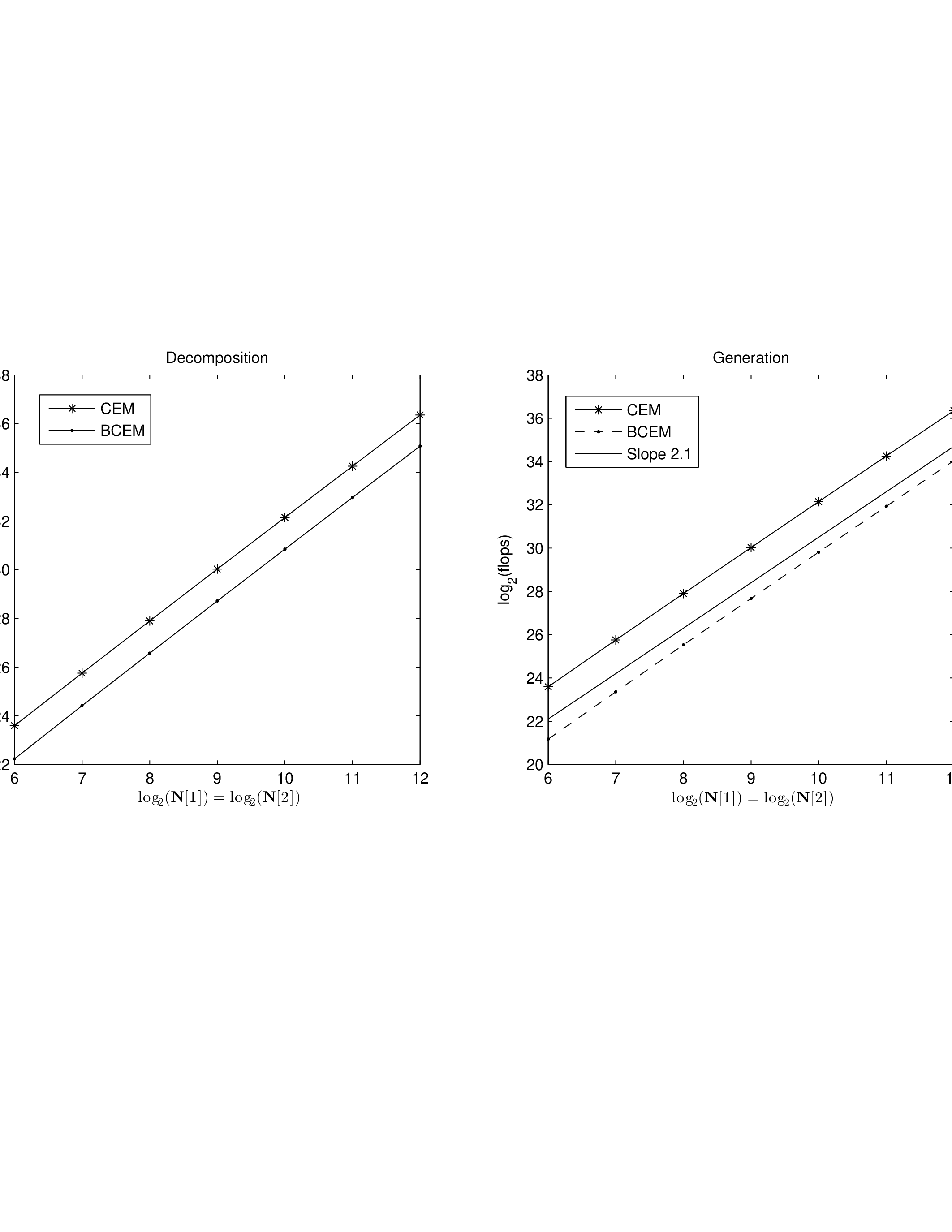}
\caption{The floating point operations required for the matrix decomposition and random field generation stages of CEM and BCEM in Example~\ref{ex:agfem2d}.}
\label{fig:fem_costs}
\end{figure}

\begin{example}\label{ex:agfem2d}
{\it Triangular finite element with a quadrature point located at the barycentre of the triangle.}
\end{example}
\vspace{-8pt}
\noindent In Example~\ref{ex:fem2d} (see Figure \ref{fig:2D_FEM}), each rectangular block contains 9 uniform grid nodes. Hence the order of the circulant matrix used by CEM is $9\overline{\mm}$, where $\mm = 2\NN$ and  $\overline{\mm}$ is the number of rectangular blocks in the extended domain $\Omega^\mathrm{E}$. Then the matrix decomposition cost for CEM is $45\overline{\mm}\log_2 9\overline{\mm}$ flops, and generation of each realization of the random field requires another $45\overline{\mm}\log_29\overline{\mm}$  flops.

Here BCEM uses only two points in each rectangular block, so the order of the block-circulant matrix is $2\overline{\mm}$. Substituting $\ell = 2$ into (\ref{eq:cost1}) and (\ref{eq:cost2}), the total matrix decomposition cost for BCEM is $10\overline{\mm}\log_2\overline{\mm}+(\mm[1]/2+1)(\mm[2]/2+1)(8/3)$ flops. Each realization of the random field is generated at the cost of $4\overline{\mm} +   10\overline{\mm}\log_2\overline{\mm}$ flops (see (\ref{eq:cost3}) and (\ref{eq:cost4})).

Figure \ref{fig:fem_costs} shows the floating point operations required by the two algorithms. One can see that BCEM is more effective in both procedures and that the complexity of BCEM grows at roughly the same rate as for CEM. Compared to CEM, BCEM reduces the matrix decomposition cost and the generation cost approximately in $2.5$ time and $4$ time, respectively. The improvement in computational efficiency is due to the fact that BCEM works with just 2/9 of nodes that CEM uses to build the circulant matrix. This also means that BCEM requires 4.5 time less memory than CEM.

\begin{figure}[!ht]
    \centering
    \includegraphics[width=.45\textwidth]{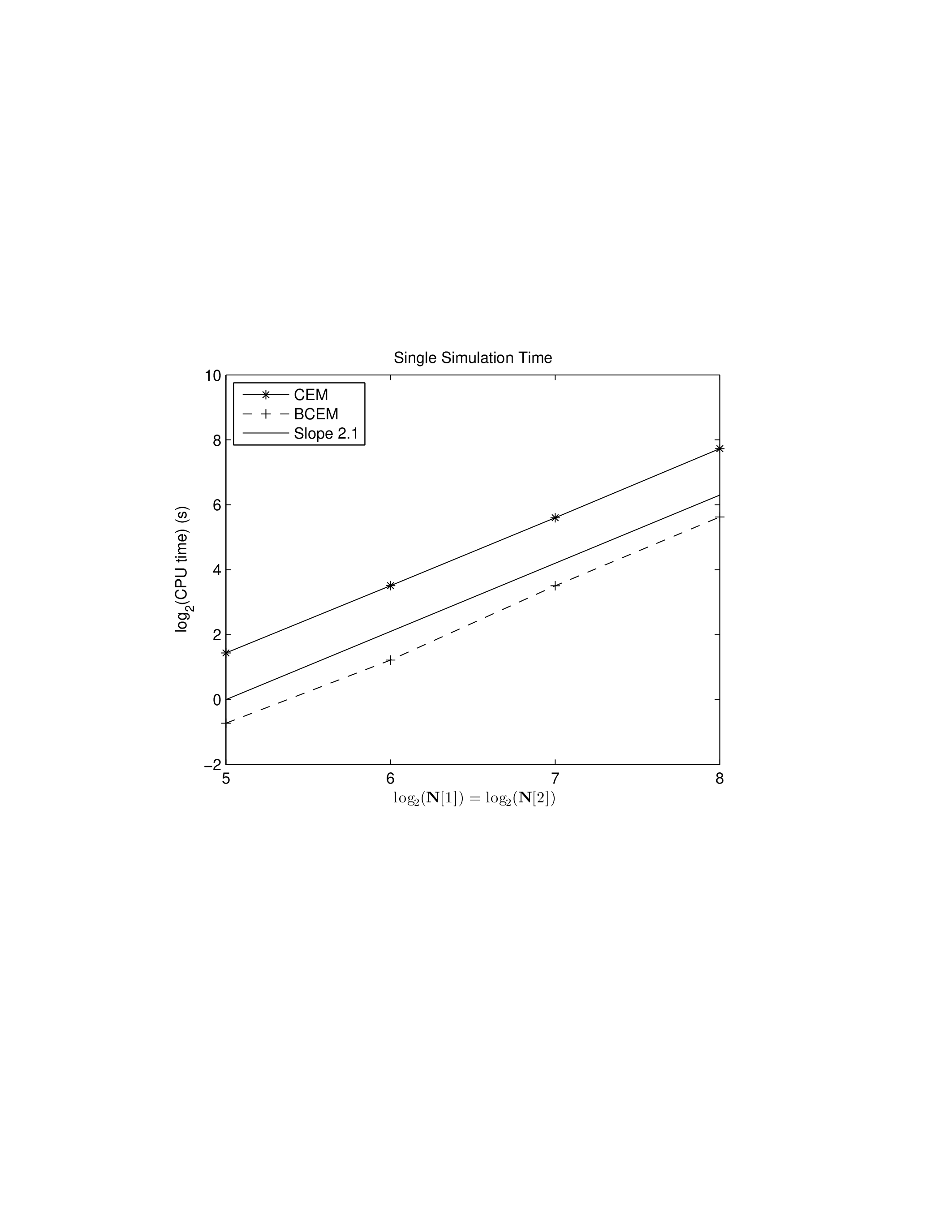}
    \qquad
    \begin{tabular}[b]{c|rr||c}\hline
     & \multicolumn{3}{c}{CPU Time (s)}  \\ \cline{2-4}
      $N$ & BCEM & CEM & speed-up \\ \hline
      32& 0.60&2.71  & 4.5\\
      64& 2.32&11.39 & 4.9 \\
      128& 11.34&48.55 & 4.3\\
      256&  49.37&212.21 &4.3	\\\hline
      \multicolumn{4}{c}{}\\
      \multicolumn{4}{c}{}\\
      \multicolumn{4}{c}{}\\
      \multicolumn{4}{c}{}\\
      \multicolumn{4}{c}{}\\
    \end{tabular}
    \captionlistentry[table]{Average time required to simulate a single realization of the random field in Example \ref{ex:agfem2d} computed using 1000 independent samples.}\label{tab:fem_generation}
    \captionsetup{labelformat=andtable}
    \caption{Average time required to simulate a single realization of the random field in Example \ref{ex:agfem2d} computed using 1000 independent samples.}\label{fig:fem_generation}
  \end{figure}

To compare performance of BCEM and CEM further, we generated samples of the random field by these two methods on an Intel Xeon E5-2450, 96GB RAM computer using MATLAB R2014a.
Figure \ref{fig:fem_generation} shows the average computational time of generation of a single realization of the random field by both methods and how it increases with increasing $\NN$. Table~\ref{tab:fem_generation} gives the CPU time and the speed up in generating a random vector using BCEM against the ones using CEM. For both methods, the CPU time increases with increase of $\NN$ at about the same rate as the theoretical rate shown in Figure \ref{fig:fem_costs} (right). We see that BCEM is about $4.3-4.5$ faster than CEM, which is close to the theoretical cost estimation in Figure \ref{fig:fem_costs}.

\begin{figure}[ht!]
\centering
\includegraphics[width=1\textwidth]{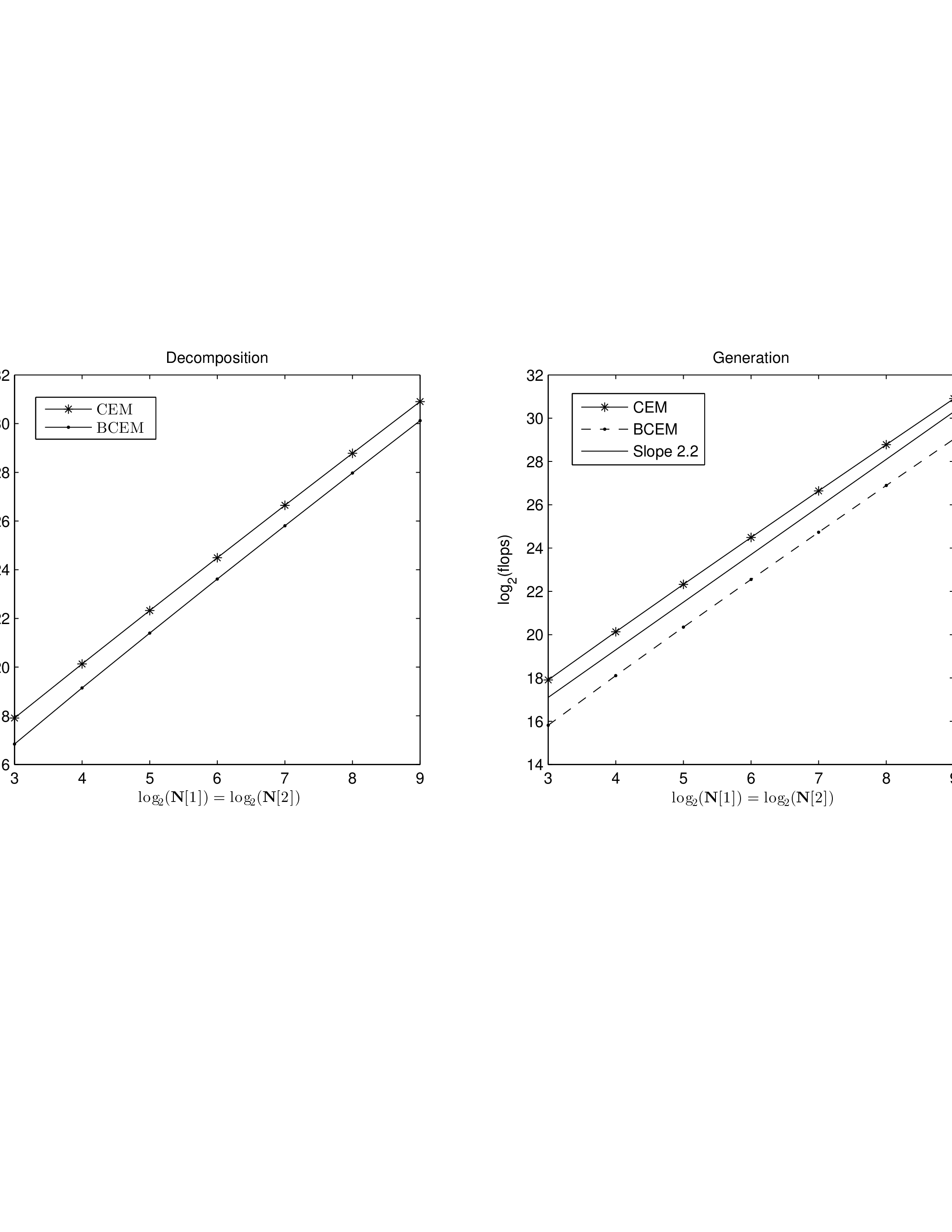}
\caption{The floating point operations required for the matrix decomposition and random field generation stages of CEM and BCEM in Example~\ref{ex:agmlmc}.}
\label{fig:mlmc_costs}
\end{figure}

\smallskip

\begin{example}\label{ex:agmlmc}
{\it Cell-centered finite volume discretization in multilevel Monte Carlo (MLMC) computation.}
\end{example}
\vspace{-8pt}
\noindent In Example~\ref{ex:mlmc}, BCEM uses 5 out of 16 uniform nodes required for CEM in each individual block to generate random variables located at the centers of both the fine and coarse cells. That is, CEM should generate random variables at the extra 11 nodes that are not used in the finite volume discretization and are not used by BCEM. Then memory requirement for CEM and BCEM are $16\overline{\mm}$ and $5\overline{\mm}$, respectively, which makes BCEM more attractive when the number of blocks is large.

Whereas the matrix decomposition and sampling costs in CEM both require $80\overline{\mm}\log_216\overline{\mm}$ flops, the computational costs of the matrix decomposition and sampling in BCEM are $55\overline{\mm}\log_2\overline{\mm}+(\mm[1]/2+1)(\mm[2]/2+1)(125/3)$ flops and $25\overline{\mm} +   25\overline{\mm}\log_2\overline{\mm}$ flops, respectively (see (\ref{eq:cost1}) - (\ref{eq:cost4}) with $\ell = 5$). Note that the total costs are dominated by $\overline{\mm}\log_2\overline{\mm}$. Hence, for large $\overline{\mm}$, the ratio of the matrix decomposition in CEM to one in BCEM is close to $80/55\approx1.45$ . For the sample generation cost, the ratio is close to $80/25\approx3.2$. These theoretical computational costs are shown in Figure~\ref{fig:mlmc_costs}.

\begin{figure}[!ht]
    \centering
    \includegraphics[width=.45\textwidth]{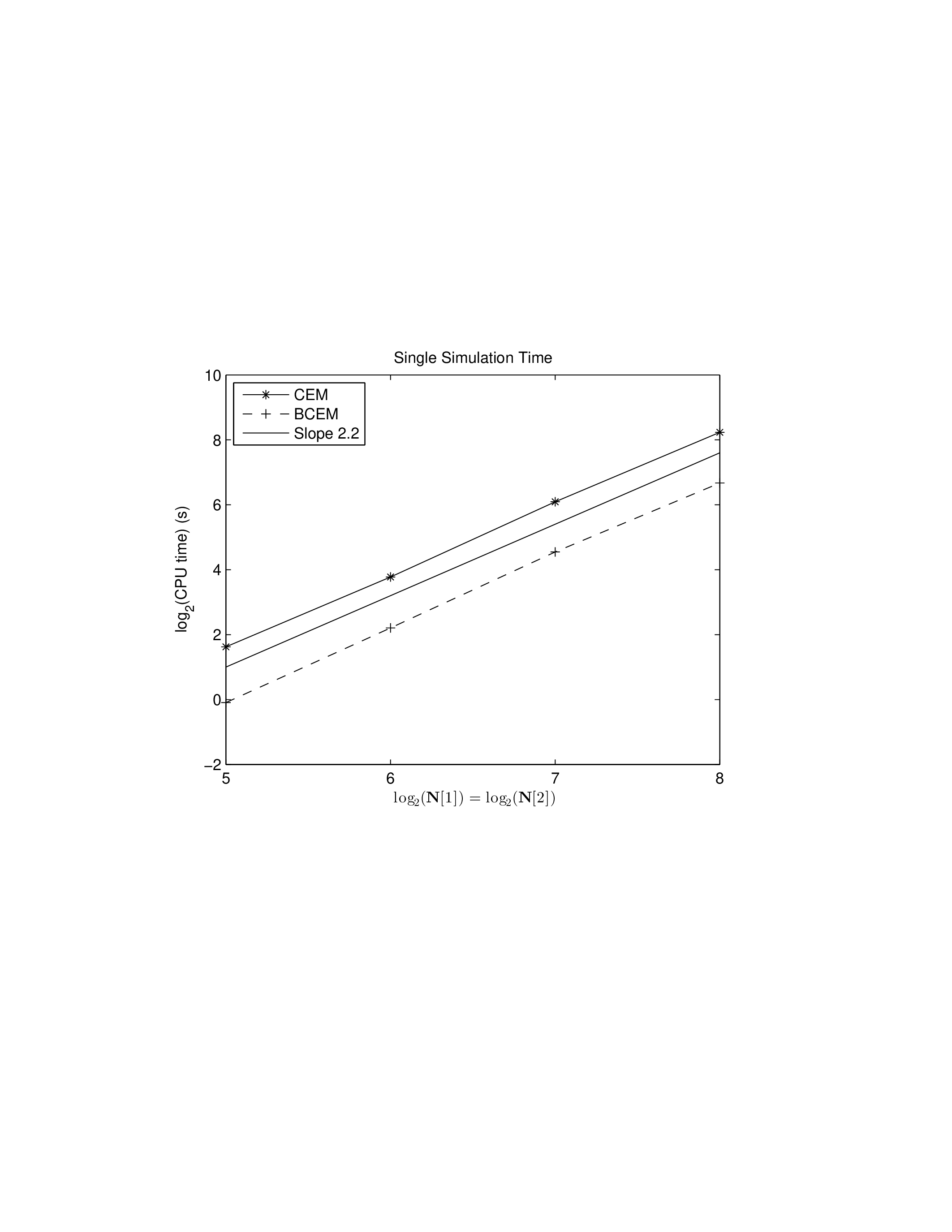}
    \qquad
    \begin{tabular}[b]{c|rr||c}\hline
     & \multicolumn{3}{c}{CPU Time (s)} \\ \cline{2-4}
      $N$ & CEM & BCEM & speed-up\\ \hline
      32& 0.94& 3.08&3.3\\
      64& 4.62& 13.71&3.0\\
      128&  23.37& 68.28&2.9 \\
      256& 101.82& 301.24 & 3.0\\\hline
      \multicolumn{3}{c}{}\\
      \multicolumn{3}{c}{}\\
      \multicolumn{3}{c}{}\\
      \multicolumn{3}{c}{}\\
      \multicolumn{3}{c}{}\\
    \end{tabular}
    \captionlistentry[table]{Average time required to simulate a single realization of the random field in Example \ref{ex:agmlmc} computed using 1000 independent samples.}\label{tab:mlmc_generation}
    \captionsetup{labelformat=andtable}
    \caption{Average time required to simulate a single realization of the random field in Example \ref{ex:agmlmc} computed using 1000 independent samples.}\label{fig:mlmc_generation}
  \end{figure}

 Figure \ref{fig:mlmc_generation} {gives the CPU times for the random field generation. We see that} the actual computational cost increases with increase of $\NN$ similarly to the theoretical one as in Figure~\ref{fig:mlmc_costs}. Table \ref{tab:mlmc_generation} demonstrates that BCEM is nearly 3 time faster than CEM as we expected from Table \ref{tab:mlmc} and Figure~ \ref{fig:mlmc_costs}. Also note that BCEM is highly parallelizable, so the computation cost can be further reduced using parallel algorithms.

We have compared BCEM and CEM on the 2D examples here. It is not difficult to see (cf. Table~\ref{tab:mlmc}) that in 3D cases BCEM can outperform CEM even more dramatically.

\begin{remark}
The MATLAB codes for BCEM used for Examples~\ref{ex:agfem2d} and~\ref{ex:agmlmc} are available at \newline
\textit{https://www.maths.nottingham.ac.uk/personal/pmzmp/bcem.html}.
\end{remark}

\section*{Acknowledegments}
This work was partially supported by the EPSRC grant EP/K031430/1.
%%%%%%%%%%%%%%%%%%%%%%% BIBLIOGRAPHY %%%%%%%%%%%%%%%%%%%%%%%%%%%

\end{document}